\newcommand{\cB}{{\cal{B}}}
\newcommand{\cC}{{\cal{C}}}
\newcommand{\cD}{{\cal{D}}}
\newcommand{\cE}{{\cal{E}}}
\newcommand{\cH}{{\cal{H}}}
\newcommand{\cM}{{\cal{M}}}
\newcommand{\cS}{{\cal S}}
\newcommand{\cT}{{\cal{T}}}
\newcommand{\cV}{{\cal{V}}}
\newcommand{\mbT}{\mathbb{T}}
\DeclareMathOperator{\Tr}{Tr}
\newtheorem{definition}{Definition}
\newtheorem{lemma}{Lemma}
\newtheorem{theorem}{Theorem}
\theoremstyle{definition}
\begin{document}

\title{Phase Transition in Tensor Models}
 
\author{
 \ Thibault Delepouve\footnote{delepouve@cpht.polytechnique.fr, Laboratoire de Physique Th\'eorique, CNRS UMR 8627, Universit\'e Paris Sud, 91405 Orsay Cedex, France
 and Centre de Physique Th\'eorique, CNRS UMR 7644, \'Ecole Polytechnique, 91128 Palaiseau Cedex, France.},
 \ Razvan Gurau\footnote{rgurau@cpht.polytechnique.fr, Centre de Physique Th\'eorique, CNRS UMR 7644, \'Ecole Polytechnique, 91128 Palaiseau Cedex, France
 and Perimeter Institute for Theoretical Physics, 31 Caroline St. N, N2L 2Y5, Waterloo, ON, Canada.}
}

\maketitle

\begin{abstract}

Generalizing matrix models, tensor models generate dynamical triangulations in any dimension and support a $1/N$ expansion.
Using the intermediate field representation we explicitly rewrite a quartic tensor model as a field theory for a fluctuation field 
around a vacuum state corresponding to the resummation of the entire leading order in $1/N$ (a resummation of the melonic family).
We then prove that the critical regime in which the continuum limit in the sense of dynamical triangulations is reached is precisely
a phase transition in the field theory sense for the fluctuation field.
\end{abstract}

\section{Introduction}

Matrix models \cite{DiFrancesco:1993nw} generate dynamical triangulations and dynamical tessellations in two dimensions. 
Tensor models \cite{ambj3dqg,sasa1,color,review,Tanasa:2011ur} generalize matrix models and generate dynamical triangulations \cite{EDTDavid,EDTAmbjorn} in any dimension 
(recently a matrix model generating dynamical triangulations is dimension three has been proposed \cite{Fukuma:2015xja}).
The continuum limit of dynamical triangulations is reached when tunning the coupling constants to some critical values \cite{Kazakov:1985ds,David:1984tx,
critical,uncoloring}. 
In this regime the number of simplices in the triangulation diverges. Sending simultaneously the volume of the individual  simplices to 
zero one obtains a continuous phase of infinitely refined random spaces \cite{DiFrancesco:1993nw,EDTAmbjorn}.

Tensor and matrix models can be regarded as field theories with no kinetic term. The fields are a 
pair of complex conjugated tensors $\mbT_{a^1\dots a^D}, \bar \mbT_{a^1\dots a^D} $ (where $a^c = 1,\dots N $) transforming under the external tensor product of 
$D$ fundamental representations of the unitary group ${\cal U}(N)$:
\[
 \mbT'_{a^1\dots a^D} = \sum_{b^1,\dots b^D=1}^N U^{(1)}_{a^1b^1} \dots U^{(D)}_{a^Db^D}  \mbT_{b^1\dots b^D} \;, \qquad
 \bar \mbT'_{a^1\dots a^D} = \sum_{b^1,\dots b^D=1}^N \bar U^{(1)}_{a^1b^1} \dots \bar U^{(D)}_{a^Db^D}  \bar \mbT_{b^1\dots b^D} \;.
\]
The action of tensor models is a function of $\mbT$ and $\bar \mbT$ which is 
invariant under unitary transformations. 
Like matrix models, tensor models are known to possess a $1/N$ expansion \cite{expansion1,expansion2,expansion3,expansioin6,expansioin5}.
The leading order in the $1/N$ expansion for matrices is given by the planar triangulations.
For tensor models, the leading order in $1/N$ is given by a class of triangulations called \emph{melonic} \cite{critical}. The melonic triangulations 
represent topological spheres in any dimensions, but one should emphasize that only a subclass of triangulations of the sphere are melonic. 
In particular the family of melonic triangulations is exponentially bounded \cite{critical}. It is not yet known whether the family of general triangulations of the sphere in dimension higher than three 
is exponentially bounded or not \cite{Rivasseau:2013bma}.

In this paper we consider a tensor model with a quartic interaction. We first rewrite this model, using the intermediate field representation \cite{Nguyen:2014mga,expansioin6}, as 
a coupled multi matrix model. Examining the vacua of this multi matrix model we find for small coupling a unique 
vacuum state invariant under unitary transformations which we call \emph{the melonic vacuum}. We derive the effective theory for the fluctuations around 
the melonic vacuum. The translation to the melonic vacuum is not a phase transition as no symmetry gets broken (the melonic vacuum possesses the full invariance of the theory). 
We show that this translation completely resums the leading order in $1/N$:
the effective theory for the perturbation fields around the melonic vacuum contributes only to lower orders in $1/N$. We show furthermore that, at each order in 
$1/N$, only a finite number of graphs of the effective theory contribute. 
Finally, we show that the criticality of the leading order contribution in $1/N$, which is the dynamical triangulation continuum limit, corresponds precisely to 
a zero eigenvalue in the effective mass matrix for the fluctuation field. 

We therefore prove that the continuum limit in dynamical triangulations is a phase transition in tensor models. This phase transition presumably corresponds to the spontaneous breaking of the unitary invariance
of the theory. In order to establish this rigorously, one needs in the future to identify the appropriate non unitary invariant vacuum state and study perturbations around it.

The situation is quite different for matrix models. A procedure similar to the one we use here for tensors can not be used for matrices: so far at least, the action of no
matrix model could be translated to an explicit ``planar vacuum state'' such that the effective theory around it contributes only a finite number of graphs at each 
order in $1/N$. 

A very interesting question is whether similar results hold for group field theories \cite{Oriti:2011jm,Baratin:2011aa,tt2}
and tensor field theories \cite{Rivasseau:2013uca,BenGeloun:2011rc,Geloun:2014kpa,Samary:2014oya,Lahoche:2015ola}.
Such theories have a genuine renormalization group flow \cite{BenGeloun:2012yk,Carrozza:2014rba} and it would be very interesting to test 
whether the phase transition we identify in this paper is associated to an infrared fixed point of this flow in which case 
tensor field theories and group field theories would naturally generate bound states of extended geometry.

This paper is organized as follows. As our results are somewhat technical, we start by discussing in section \ref{sec:discussion} their interpretation.
In section \ref{sec:model} we introduce the quartic tensor model and in section \ref{sec:translation} we discuss the effective theory around an
invariant field configuration. In section \ref{sec:multimaps} we introduce the class of Feynman graphs (combinatorial maps) relevant
for the Feynman expansion of the effective theory which we detail in section \ref{sec:Feynrules}. 
Finally in section \ref{sec:translatevacuum} we consider the translation to the invariant vacuum state and detail the $1/N$ expansion of the fluctuation field. 

\section{Discussion}\label{sec:discussion}

The partition functions of tensor and matrix models are generating functions for dynamical triangulations \cite{DiFrancesco:1993nw,EDTDavid,EDTAmbjorn}.
Denoting $\Delta$ a (strongly) connected $D$ dimensional triangulation and ${\rm Top}(\Delta)$ the number of $D$ simplices in $\Delta$,
the logarithm of the partition function (the free energy) writes as: 
\[
  W =   \frac{1}{N^D} \ln Z = \sum_{ \Delta } \;  g^{ {\rm Top}(\Delta)  } N^{ - \frac{2}{(D-1)!}\omega(\Delta)} \;,
\]
where $N$ denotes the size of the matrices (or tensors) and $\omega(\Delta)$ is a non negative integer. In the case of matrices $D=2$,
$\omega(\Delta)$ is the genus of the triangulations \cite{DiFrancesco:1993nw}, while in higher dimensions \cite{expansion1,expansion3} $\omega(\Delta)$ is the \emph{degree}.
While the degree is not a topological invariant, it serves to organize the series defining $W$ as a series in $1/N$. 

In the large $N$ limit only a subclass
of triangulations contribute: the planar triangulations in dimension two and the melonic triangulations in higher dimension. Both families are 
exponentially bounded. Similar results hold order by order in $1/N$ and it can be shown that \cite{DiFrancesco:1993nw,GurSch,Fusy:2014rba} summing the families
of triangulations at fixed order in $1/N$, the free energy becomes (up to non universal analytic pieces in $g$ which
can be disposed of by a suitable number of derivatives): 
\[
  W \sim  d_0 (g_c-g)^{\nu_0} +\sum_{q\ge 1} \frac{1}{N^q} d_q (g_c-g)^{\nu_q} \;.
\]
While the critical value $g_c$ depends on the details of the model, all the series at fixed order in $1/N$ become critical at \emph{the same critical value} $g_c$. 
The (non integer) critical exponents $\nu_0, \nu_q$ are, to a large extent, universal for a given dimension.

Considering equilateral simplices of volume $V_D$, 
the physical volume of a triangulation is proportional to the number of $D$ simplices, and the average physical volume is:
\[
 {\rm Vol}  = V_D \Braket{  {\rm Top}(\Delta) }  = V_D \; g\partial_g  \ln W  \sim_{g\to g_c} \frac{V_D}{g-g_c} \;.
\]
The continuum limit of dynamical triangulation is reached by tuning $g\to g_c$ and sending at the same time $V_D\to 0$ while keeping the average physical volume fixed.
In this regime one obtains a phase of random infinitely refined geometries. 
One can study more involved matrix and tensor models, adding for instance coupling constants for lower dimensional simplices. The various continuum phases of the model will then be spanned 
by these new couplings. 

Dynamical triangulations can also be studied \cite{EDTAmbjorn,Ambjorn:2012jv} with no reference to matrix and tensor models. In this case one restricts the family of triangulations
by some prescription. For instance one can chose to restrict to triangulations of the sphere, or to foliated triangulations \cite{Ambjorn:2012jv}. Of course one should check that the 
restricted family of triangulations is exponentially bounded (in order to be able to obtain a continuum limit) but, due to technical difficulties, this is never done in practice.
Still, one can learn numerous lessons from numerical studies. 

We show in this paper that a certain tensor model can be recast as a coupled multi matrix model for $D$ matrices $H^c$, with $c=1,\dots D$.
We then show that, in this new formulation, the vacuum state of the model is not $H^c=0$, but it is given at small coupling constant $g$ 
by the invariant configuration:
\[
 H^c = a_0 {\bf 1} \;, \qquad a_0= \frac{1-\sqrt{1-4Dg^2}}{2Dg} \;.
\]
We translate to the vacuum state and write the partition function of the model as some explicit factor multiplying the partition function of the effective theory of the 
perturbation fields $M^c$ around the vacuum state:
 \begin{align}\label{eq:summary}
   Z^{(4)} & =   e^{ - N^D \left(  \frac{D a_0^2 }{2}    +    \ln (1 - g Da_0) \right) }    (1-a_0^2)^{\frac{D-1}{2}}      \int [dM^c] \; e^{-S(M)  } \;,  \crcr
S ( M) & =  \frac{1}{2}   N^{D-1}(1-a_0^2) \sum_{c=1}^D \Tr_c \left[ M^cM^c \right] -  \frac{1}{2} N^{D-2} \frac{D-1}{D} a_0^2  \left( \sum_{c=1}^D \Tr_c[M^c] \right)^2  - Q(M)  \; .
 \end{align}
We prove on the one hand that the integral over the perturbation fields is subleading in $1/N$, namely:
\[
  - W  =    - \frac{1}{N^D} \ln Z^{(4)} = \frac{D}{2} a_0^2 + \ln (1 - g Da_0)  + O\left( \frac{1}{N}\right) \;,
\]
and on the other that the mass matrix of the effective theory for the perturbation $M^c$, $\frac{\partial^2 S (  M)}{ \partial  M \partial   M} \big{\vert}_{M=0}$ 
has eigenvalues $ N^{D-1}(1-a_0^2)$ with degeneracy $ND^2-1$ and $ N^{D-1} (1-Da_0^2)$ with degeneracy $1$.

{\it Dynamical triangulations continuum limit.} The dynamical triangulations continuum limit is obtained from the explicit factor in Eq. \eqref{eq:summary}.
The leading order free energy writes in terms of $g$ as:
\begin{align*}
 - W_{\rm LO}  = \frac{D}{2} a_0^2 + \ln (1 - g Da_0) = -\frac{1}{2} + \frac{1-\sqrt{1-4Dg^2} }{4Dg^2} - \ln \left(    \frac{1-\sqrt{1-4Dg^2} }{2Dg^2}\right) \; ,
\end{align*}
and becomes critical for the critical constant:
\[
  \boxed{ g_c =\frac{1}{2\sqrt{D} }  \; .} 
\]
In the critical regime $  \frac{g^2}{g_c^2} =  4Dg^2 = 1-\epsilon $ one obtains:
\begin{align*}
 - W_{\rm LO}  = -\frac{1}{2} + \frac{1-\sqrt{\epsilon}}{1-\epsilon} - \ln \left( 2 \frac{1- \sqrt{\epsilon}}{ 1-\epsilon} \right)
 = \frac{1}{2} +\ln 2 +\frac{1}{2} \epsilon -\frac{2}{3} \epsilon^{3/2} + O(\epsilon^2) \;,
\end{align*}
hence the critical exponent $ \nu_0= \frac{3}{2}$.

{\it Field theory phase transition.} The effective field theory for the perturbation field $M^c$ (which is subleading in $1/N$) undergoes a phase transition when the mass matrix has a zero eigenvalue,
 that is for $1-Da_0^2=0$. At this point the invariant vacuum is no longer stable and the symmetry of the theory is broken. This occurs for the coupling constant $g_{p.t.}$ solution of:
 \[
  \frac{1}{\sqrt{D}} = \frac{1-\sqrt{1-4Dg_{p.t.}^2}}{2Dg_{p.t.}} \Rightarrow \sqrt{1-4Dg_{p.t.}^2} = (1-4Dg_{p.t.}^2  ) \Rightarrow \boxed{ g_{p.t.} = \frac{1}{2\sqrt{D}} \; . }
 \]

We conclude that, in tensor models, the dynamical triangulations continuum limit is a phase transition associated to a breaking of the symmetry (unitary symmetry and color permutation) 
of the model.

We stress that the phase transition we identify in this paper is of an entirely different nature form the phase transitions studied so far in the context of 
dynamical triangulations and matrix models:
\begin{itemize}
 \item {\it Dynamical triangulations.}
In dynamical triangulations \cite{EDTAmbjorn,Ambjorn:2012jv} one takes the continuum limit 
and obtains various continuum phases. In Euclidean dynamical triangulations \cite{EDTAmbjorn} one typically obtains two continuum phases (crumpled and branched polymer), 
while in causal dynamical triangulations \cite{Ambjorn:2012jv} one obtains three continuum phases. 
One is then interested in the transition between the continuum phases. These transitions have nothing to do with the phase transition from discrete to continuum
we identify in this paper.

\item {\it Matrix models.} In matrix models \cite{Marino:2012zq} one studies the transition from a one cut 
resolvent to a multi cut resolvent. The phase transition we identify in this paper is very different, and occurs in a completely different range of parameters. 
In matrix models the continuum limit is a change of behavior of the spectral density of a one cut solution 
(one passes from a square root law to a $3/2$ power law for the distribution of eigenvalues). It is this change of behavior which we show in this paper is a genuine phase transition 
in the case of tensor models.
\end{itemize}

\section{The quartic melonic model}\label{sec:model}

A detailed introduction to the general framework of tensor models can be found in \cite{uncoloring} or
\cite{universality}, and an introduction to the quartic melonic model in \cite{Nguyen:2014mga} and \cite{expansioin6}.
An important notion for tensor models is the notion of $D$-colored graph \cite{review}. Such graphs are dual to $(D-1)$-dimensional triangulations \cite{review}. 
\begin{definition}
 A \emph{bipartite edge $D$-colored graph} ($D$-colored graph for short) $\cB$ is a graph with vertex set $\cV(\cB)$ and edge set $\cE(\cB)$ such that:
 \begin{itemize}
  \item the vertex set $\cV(\cB)$ is the disjoint union of the set $\cV^w(\cB)$ of white vertices and the set $\cV^b(\cB)$ of black vertices, with $|\cV^w(\cB)|=|\cV^b(\cB)| = k(\cB)$.
  \item the edge set $\cE(\cB)$ is the disjoint union of $D$ sets $\cE^{c}(\cB)$ of edges of color $c$, with $c\in \{1,\dots D\}$ such that 
  any edge connects a black and a white vertex.
  \item all the vertices have degree $D$ and all the edges incident to a vertex have distinct colors.
 \end{itemize}
\end{definition}

The fields in a tensor model are a pair of complex conjugated rank $D$ tensors ${\mbT}_{a^1 \dots a^D},\ \bar{\mbT}_{a^1\dots a^D}$, with $a^c= 1\dots N$.
We call the position $c$ of an index its \emph{color} and denote the set of colors $\cD=\{1 \dots D\}$. 
We denote $a^\cD = (a^c, c\in\cD)$. The tensors $\mbT$ and $\bar \mbT$ transform under the external tensor product of 
$D$ fundamental representations of the unitary group ${\cal U}(N)$:
\[
 \mbT'_{a^{\cD}} = \sum_{b^1,\dots b^D=1}^N U^{(1)}_{a^1b^1} \dots U^{(D)}_{a^Db^D}  \mbT_{b^{\cD}} \;, \qquad
 \bar \mbT'_{  a^{\cD}} = \sum_{ b^1,\dots   b^D=1}^N \bar U^{(1)}_{  a^1   b^1} \dots \bar U^{(D)}_{  a^D   b^D}  \bar \mbT_{ b^{\cD}} \;.
\]

Starting from $ \mbT $ and $\bar \mbT$ one can build a class of invariant polynomials called \emph{trace invariants}. They are in bijection with bipartite edge $D$-colored graphs.
The trace invariant associated to $\cB$ is obtained by associating to every white vertex $i$ of $\cB$ a tensor $\mbT_{a^{\cD}_i}$,
to every black vertex $\bar j$ of $\cB$ a complex conjugated tensor $\bar \mbT_{\bar a^{\cD}_{\bar j}}$ and 
to every edge $e^c=(i,\bar j)$ of color $c$ of $\cB$, a contraction of the indices of color $c$ of the two tensors associated to its end vertices $i$ and $\bar j$:
\[
 \Tr_{\cB}(\mbT,\bar \mbT) = \left( \prod_{j=1}^{k(\cB)} \mbT_{a_j^{\cD} } \right) \left( \prod_{\bar j=1}^{k(\cB)} \mbT_{ \bar a_{\bar j}^{\cD} } \right) 
  \left( \prod_{e^c = (i, \bar j)\in \cE(\cB)} \delta_{a_i^c \bar a_{\bar j}^c} \right)
\]

The melonic quartic tensor model the probability measure:
\begin{align}\label{eq:4model}
d\mu^{(4)} &=  \frac{1}{Z^{(4)}(\lambda)}\left(  \prod_{n^{\cD}}  N^{D-1} \frac{ d \bar \mbT_{n^{\cD}}  d\mbT_{n^{\cD}} }{2 \imath \pi} \right) \; 
e^{ -N^{D-1} \left(  \sum_{n^{\cD} \bar n^{\cD}} \bar \mbT_{ \bar n^{\cD} }  \delta_{ \bar n^{\cD} n^{\cD}} 
  \mbT_{ n^{\cD}}   - \frac{g^2}{2} \sum_{ c \in \cD } V^{(4)}_c ( \bar \mbT,  \mbT  )    
   \right) }  \;, \crcr
    V^{(4)}_c( \bar \mbT,  \mbT  )    &= \sum_{\bar n^{\cD} n^{\cD} m^{\cD} \bar m^{\cD}}
        \left( \bar \mbT_{\bar n^{\cD}} \delta_{\bar n^{\cD \setminus \{c\}} n^{\cD\setminus \{c\}}} \mbT_{n^{\cD}} \right)    \delta_{\bar n^{c} m^{c} } \delta_{\bar m^{c} n^{c}}
        \left( \bar \mbT_{\bar m^{\cD}} \delta_{\bar m^{\cD \setminus \{c\}} m^{ \cD \setminus \{c\}} }\mbT_{m^{\cD}} \right) \; ,
\end{align}
where, for any set $\cC \subset \cD$, we denote $ \delta_{\bar n^{\cC} n^{\cC}} = \prod_{c\in \cC} \delta_{\bar n^c n^c}$. The model is stable for $g^2<0$.
Each invariant $ V^{(4)}_c( \bar \mbT,  \mbT  )  $ can be represented as
a $D$-colored graph with two white and two black vertices and signifies a gluing of four $D$-simplices, two positively oriented and two negatively oriented. The coupling 
$g$ counts the number of positively oriented simplices. When evaluating the partition function in perturbation theory one obtains Feynman graphs having $(D+1)$ colors. Each 
such graph is dual to a $D$-dimensional triangulation.

The generating function of the moments of $\mu^{(4)}$ is:
\begin{align*}
  Z^{(4)}(J,\bar J) &= \int d\mu^{(4)} \;\;  e^{      \sum  \bar {\mathbb T}_{   \bar a^{\cD} } \bar  J_{  \bar  a^{\cD} }
  +\sum    J_{   a^{\cD}   }     {\mathbb T} _{  a^{\cD}  }   } \; ,
  \end{align*}
 The partition function is the generating function at zero external sources and is denoted by $  Z^{(4)}\equiv Z^{(4)}(0,0) $. 
The \emph{cumulants} (connected moments) of $\mu^{(4)}$ can be computed as the derivatives of the logarithm of $Z$:
\[
 \kappa_{2k} \bigl({\mathbb T}_{ a_1^{\cD} }, \bar {\mathbb T}_{ \bar a^{\cD}_{\bar 1} } ,
  \dots {\mathbb T}_{   a^{\cD}_k    } , \bar {\mathbb T}_{   \bar a^{\cD}_{\bar k} } \bigr)  = \frac{\partial^{(2k)}} {  \partial J_{    a^{\cD}_1   } \partial \bar J_{  \bar a^{\cD}_{\bar 1}  }
    \dots   \partial J_{   a^{\cD}_k   }\partial \bar J_{    \bar a^{\cD}_{\bar k}   } 
   } \ln  Z^{(4)}(J,\bar J )  \Big{\vert}_{\genfrac{}{}{0pt}{}{J =0}{\bar J =0} } \;.
\]
The cumulants are linear combinations of \emph{trace invariant operators} \cite{universality}:
\[
 \kappa_{2k } [ {\mathbb T}_{  a^{\cD}_{ 1 } } ,  \bar {\mathbb T}_{  {\bar a}^{\cD}_{ \bar 1  } } \dots  {\mathbb T}_{   a^{\cD}_{  k   } },
 \bar {\mathbb T}_{  \bar a^{\cD}_{ \bar k}   } ]
  = \sum_{ \cB, \; k(\cB)=k}  {\mathfrak K} (  \cB  ,\mu^{(4)}  )  \prod_{e^c = (i,\bar j) \in \cE(\cB)} \delta_{a^c_i\bar a_{\bar j}^c } \;,
\]
where the sum runs over all the $D$-colored graphs \cite{review} $\cB $ with $k(\cB)$ white vertices.
Somewhat abusively, we call the coefficient $   {\mathfrak K} (  \cB  ,\mu^{(4)}  ) $ also a cumulant
and we define the \emph{rescaled cumulant} corresponding to the graph $\cB$:
\[
 K(\cB, \mu^{(4)}) \equiv \frac{  {\mathfrak K} (  \cB  ,\mu^{(4)}  )   }{ N^{D- 2 k(\cB) (D-1) -C(\cB) } } \; ,
\]
where $C(\cB)$ denotes the number of connected components of $\cB$.

\subsection{The intermediate field representation}
  
The intermediate field representation of the quartic melonic models has already been discussed in detail in \cite{Nguyen:2014mga} and \cite{expansioin6}.
Let us recall here the main steps in its derivation. Each quartic interaction in expressed in the intermediate field 
(Hubbard Stratonovich) representation as an integral over an auxiliary $N \times N$ hermitian matrix:
\[
 e^{ N^{D-1} \frac{g^2}{2} \sum_{ c \in \cD } V^{(4)}_c ( \bar \mbT,  \mbT  ) }  
 =\int [dH_c]\;   e^{-\frac{1}{2} N^{D-1} \Tr[H^cH^c] +g N^{D-1}\sum_{n^c \bar n^c} H^c_{\bar n^c n^c} \sum_{n^{\cD \setminus \{ c \} } \bar n^{\cD \setminus \{ c \} }} 
 \left( \bar \mbT_{\bar n^{\cD}} \delta_{\bar n^{\cD \setminus \{c\}} n^{\cD\setminus \{c\}}} \mbT_{n^{\cD}} \right) 
 } \;, 
\]
where the integral over $H^c$ is normalized to $1$ for $g=0$. As there are $D$ quartic melonic interactions, one obtains $D$ intermediate matrix fields $H^c$. 
Each $H^c$ is an $N \times N$ matrix and the indices of $H^c$ have the color $c$.
The integral over $\mbT$ and $\bar \mbT$ is now Gaussian and can be computed explicitly to obtain:
\begin{align*}
  Z^{(4)}(J,\bar J)  & =  \int \left( \prod_{c=1}^D [dH^{c}] \right) 
    e^{ - S(H)+\frac{1}{N^{D-1}} JR(H) \bar J } \; , \crcr    
    S(H) &= \frac{1}{2} N^{D-1}  \sum_{c=1}^{ D}   \Tr_{c}[H^{c}H^{c}]  -  \Tr_{\cD} \left[ \ln R(H) \right]  \;,
\end{align*}
where $\Tr_c$ is the trace over the index of color $c$, $\Tr_{\cD}$ is the trace over all the indices, and the \emph{resolvent} is:
\[
 R(H)  =  \frac{1}{  {\bf 1}^{\cD}- g \sum_{ c=1}^D  H^{c} \otimes {\bf 1}^{\cD\setminus \{c\}}   } \; ,
\]
where ${\bf 1}^c$ is the identity matrix with indices of color $c$ and for any subset $\cC\subset \cD$ we denote ${\bf 1}^{\cC} = \bigotimes_{c\in \cC} {\bf 1}^{c}$. 

In the rest of this paper $g$ will be considered real, positive, and smaller than $\frac{1}{2\sqrt{D}}$. Observe that in this 
range of the coupling constant the resolvent can be singular. However one can still make sense of the integral defining $ Z^{(4)}(J,\bar J)$ 
by modifying the contour of integration for the field $H^c$ in the complex plane so as to avoid the singularities.

\subsection{Equations of motion}

Taking into account that:
\[  \frac{\partial}{\partial H^{c}_{\bar a^{c}  b^{c} } } \Tr_{\cD} \left[ \ln  R(H)   \right] 
= 
  g\; \sum_{ \bar p^{\cD } q^{\cD } }
   R(H)_{ q^{\cD}  \bar p^{\cD }   } \left( \delta_{\bar p^{\cD\setminus \{c\} } q^{\cD \setminus \{c\} } } \delta_{\bar p^{c} \bar a^{c} } 
    \delta_{q^{c} b^{c}} \right) \; ,
\]
the classical equations of motion in the intermediate field representation write:
\[
0=\frac{\partial S(H)}{\partial H^c_{ b^ca^c  }} =  N^{D-1} H^c_{a^cb^c} - g \; \sum_{ p^{\cD } q^{\cD } }
   R(H)_{ q^{\cD}   p^{\cD }   } \left( \delta_{  p^{\cD\setminus \{c\} } q^{\cD \setminus \{c\} } } \delta_{ p^{c} b^{c} } 
    \delta_{q^{c} a^{c}} \right)  \;.
\] 

\emph{The $H^c=0$ configuration is not a vacuum of the theory} as, for $g\neq 0$, $H^c=0$ is not a solution of the classical equations of motion.
In order to find solutions to the classical equations of motion we express them in matrix form:
\begin{equation}\label{eq:eqmotion}
  H^c  - g \frac{1}{N^{D-1}} \Tr_{\cD \setminus \{c\} } \left[   \frac{1}{  {\bf 1}^{\cD}- g \sum_{ c'=1}^D 
   H^{c'} \otimes {\bf 1}^{\cD\setminus \{c'\}}   }     \right] =0 \; ,
\end{equation}
where $\Tr_{\cD \setminus \{c\} } $ denotes the trace over the indices with colors in $\cD \setminus \{c\} $. 
A simple inspection of these equations reveals that for $g$ small enough the stable vacuum of the theory is 
invariant under conjugation by the unitary group and invariant under color permutation, $H^c = a {\bf 1}$. 
The invariant solutions of eq. \eqref{eq:eqmotion} 
are then obtained for $a$ satisfying the self consistency equation:
\[
 a  = \frac{ g}{1 - g Da } \; .
\]

This self consistency equation has two solutions: 
\begin{itemize}
 \item {\it The melonic vacuum.} A first solution, denoted $a_0$ and called the \emph{melonic vacuum}, is the sum of a power series in $g$:
 \[
  a_0 =  \frac{1 - \sqrt{1-4Dg^2 }}{2g D } = g \sum_{n\ge 0} \frac{1}{(n+1)!} \binom{2n}{n} (Dg^2)^n \; .
\]
We will see below that this solution is the stable vacuum of the theory for $g$ small enough.
In the region of interest ($g\in \mathbb{R_+},\; 4Dg^2 <1 $), $a_0$ is real, positive, and bounded by:
\[
 a_0  =   \frac{1 - \sqrt{1-4Dg^2}}{ 2D g } = \frac{2 g}{ 1 +  \sqrt{1-4Dg^2} }
  \le \frac{1}{\sqrt{D}} \;,
\]
as $ \frac{2 g }{ 1 +  \sqrt{1-4Dg^2} }$  is increasing with $g$ and attains its maximum for $g = \frac{1}{2\sqrt{D}} $.

\item {\it The instanton.} The second solution, denoted $a_{\rm inst}$, is an instanton solution:
\[
  a_{\rm inst} =  \frac{1 + \sqrt{1-4Dg^2}}{2g D } \; .
\]
As $a_0  a_{\rm inst} = \frac{1}{D}$, it follows that in the region of interest ($g\in \mathbb{R_+},\; 4Dg^2 <1 $), $a_{\rm inst}$
is real, positive, and $a_{\rm inst} \ge \frac{1}{\sqrt{D}} $.
\end{itemize}

For $g = \frac{1}{2\sqrt{D}}$ the two solutions collapse: $a_0|_{g = \frac{1}{2\sqrt{D}} }=a_{\rm inst}|_{g = \frac{1}{2\sqrt{D}} }=\frac{1}{\sqrt{D}}$. 

\section{Translating the intermediate field}\label{sec:translation}

In order to study the effective theory around a solution of the classical equations of motion we must translate the field to $H^c = a \mathbf{1} + M^c$.
This translation is a translation (in the complex plane) by $a$ of the contour of integration of all the diagonal entries of $H^c$. The effective theory is obtained by Taylor expanding in $M^c$. 
We first discuss a translation of $H^c$ by some arbitrary constant $a$.
We denote $M =\sum_{ c=1}^D  M^{c} \otimes {\bf 1}^{\cD\setminus \{c\}}    $ and we define:
\[
 b(a) \equiv \frac{g}{ 1 - g Da  } \; .
\]
Observe that if $g$ is real and $a$ is real, then $b(a)$ is also real. 
The melonic vacuum and the instanton are the two solutions of the self consistency equation $a = b(a)$.
We will use the shorthand notation $b$ instead of $b(a)$ but the reader should keep in mind that $b$ is a function of 
$a$. 

In terms of the perturbation field $M$, the resolvent and the trace of its logarithm write as:
\begin{align*}
 & R(a \mathbf{1}^{\cD} + M)   = \frac{1}{(1 - g Da )\mathbf{1}^{\cD}  - gM } = 
   \left( \frac{1}{ 1 - g Da  } \right)  \frac{1}{ \mathbf{1}^{\cD}  - b  M   } \; , \crcr
 & \Tr_{\cD} \left[ \ln R(a \mathbf{1}^{\cD} + M) \right]  = -N^D \ln (1 - g Da) 
 + \Tr_{\cD} \left[ \ln  \frac{1}{ \mathbf{1}^{\cD}  - b  M   }  \right] \; .
\end{align*}

After translation by $a$ the action becomes:
\begin{align*}
  \frac{1}{2} N^D D a^2 + N^{D-1} a \sum_{c=1}^D \Tr_c[M^c] + \frac{1}{2}N^{D-1}\sum_{c=1}^D \Tr_{c}[M^c M^c]  + N^D \ln (1 - g Da) - \Tr_{\cD} \left[ \ln  \frac{1}{ \mathbf{1}^{\cD}  - b  M   }  \right] \; ,
\end{align*}
and the generating function writes in terms of the perturbation field $M$:
\begin{align*}
   Z^{(4)}(J,\bar J) & =   e^{ - N^D \left(  \frac{D a^2 }{2}    +    \ln (1 - g Da) \right) } \\
   & \qquad \times   \int \left( \prod_{c=1}^D [dM^{c}] \right)  
    e^{  - \frac{N^{D-1}}{2}\sum_{c=1}^D \Tr_{c}[M^c M^c]  - a   \Tr_{\cD }[M]   +\Tr_{\cD} \left[ \ln  \frac{1}{ \mathbf{1}^{\cD}  - b  M   }  \right] + \frac{1}{N^{D-1} } 
     \frac{1}{(1 - g Da )} J \left(  \frac{1}{ \mathbf{1}^{\cD}  - b  M   } \right)\bar J
     } \; .
\end{align*}

The logarithm of the resolvent contains quadratic terms in $M$, which must be reabsorbed in the quadratic part of the action.
Let us define the subtracted vertex function:
\[
 Q(M) = \Tr_{\cD} \left[ \ln  \frac{1}{ \mathbf{1}^{\cD}  - b  M   }  \right] - \Tr_{\cD} \left[ a  M \right] - \frac{1}{2}\Tr_{\cD} \left[ \big(b  M \big)^2\right] \;,
\]
which, by definition, has no quadratic terms in $M$. The generating function is then:

\begin{align}\label{eq:Ztranslatat}
   Z^{(4)}(J,\bar J) & =   e^{ - N^D \left(  \frac{D a^2 }{2}    +    \ln (1 - g Da) \right) } \nonumber\\
   & \qquad \times   \int \left( \prod_{c=1}^D [dM^{c}] \right)  
    e^{  - \frac{N^{D-1}}{2}\sum_{c=1}^D \Tr_{c}[M^c M^c] + \frac{1}{2}\Tr_{\cD} \left[ \big( b M \big)^2\right] +   Q(M) + \frac{1}{N^{D-1} } 
     \frac{1}{(1 - g Da )} J \left(  \frac{1}{ \mathbf{1}^{\cD}  - b  M   } \right)\bar J
     } \; .
\end{align}

\paragraph*{Integral form of the subtracted vertex function.}
One can give a closed integral formula for the subtracted vertex function $Q(M)$. A Taylor expansion with integral rest yields:
\begin{align*}
  & \Tr_{\cD} \left[ \ln  \frac{1}{ \mathbf{1}^{\cD}  - b  M   }  \right]   = 
   \crcr
 & \qquad =\left\{ \partial_t  \Tr_{\cD} \left[ \ln  \frac{1}{ \mathbf{1}^{\cD}  - t b  M   }  \right] \right\}_{t=0}   
 +  \frac{1}{2} \left\{ \partial^2_t  \Tr_{\cD} \left[ \ln  \frac{1}{ \mathbf{1}^{\cD}  - t b  M   }  \right] \right\}_{t=0}  
 + \frac{1}{2} \int_0^1 dt \; (1-t)^2  \partial^3_t  \Tr_{\cD} \left[ \ln  \frac{1}{ \mathbf{1}^{\cD}  - t b  M   }  \right]\; ,
\end{align*}
and substituting the derivatives:
\[
 \partial_t^r \Tr_{\cD} \left[ \ln  \frac{1}{ \mathbf{1}^{\cD}  - t b  M   }  \right] = (r-1)! 
 \Tr_{\cD} \left[    \frac{1}{ (\mathbf{1}^{\cD}  - t b  M)^r   } \big( b M \big)^{r} \right] \; ,
\]
we obtain:
\[
   Q(M)  =  (b-a) \Tr_{\cD} \left[ M \right] + \int_0^1 dt \; (1-t)^2   \Tr_{\cD} \left[    \frac{1}{ ( \mathbf{1}^{\cD}  - t b  M  )^3 } \big( b  M \big)^{3} \right] \; ,
\]
that is $Q(M)$ is the sum of a linear piece and a piece starting at order $M^3$.

\paragraph*{Linear term.} At the price of modifying the covariance, the translation of $H^c$ by a generic $a$  introduces some univalent vertices in the theory.
Choosing $a=a_0$ comes to choosing the weight of the univalent vertex as the counterterm of the melonic graphs \cite{critical,uncoloring}, and by translating to the melonic vacuum, one subtracts the contribution 
of the \emph{entire melonic family}: as $a_0 = b(a_0)$, the linear term $\big(  b  -a  \big)    \Tr_{\cD}[M] $
which represents the univalent vertices vanishes. As expected, in this case $M^c=0$ is a solution of the classical equations of motion due to the fact that the remaining piece of 
$Q(M)$ starts at order $M^3$.

No such interpretation exists for the translation to the instanton $a_{\rm inst}$ although the linear term cancels also in that case.

\paragraph*{Free energy.} We can compute the large $N$ free energy of the model in the perturbative regime using Schwinger Dyson equations.
According to \cite{uncoloring}, the large $N$ covariance $K_2$ and the large $N$ free energy are:
\begin{align*}
 & 1 - K_2 + D g^2 K_2^2 =0\ \Rightarrow\ K_2 = \frac{  1 - \sqrt{ 1 - 4D g^2 } }{ 2D g^2 }=  \frac{ a_0}{g } \;, \crcr
 &  W_{\infty} =\lim_{N\to \infty} \left(\frac{1}{N^D} \ln Z^{(4)}(0,0) \right)= 1 + \ln ( K_2 ) -   K_2 + \frac{Dg^2}{2} K_2^2 = - \frac{Dg^2}{2} K_2^2 +  \ln ( K_2 ) \;.
\end{align*}
The large $N$ free energy writes in term of $a_0$ as:
\[
   W_{\infty} = -\frac{D}{2} a_0^2 - \ln (1 - g Da_0)  \;,
 \]
that is precisely the explicit prefactor in Eq.~\eqref{eq:Ztranslatat} for $a =  a_0$. We will re-derive this result below.

It ensues that the translation to the melonic vacuum $H^c = a_0\mathbf{1}$ resums the contribution of the melonic sector,  
and the remaining integral over $M$ (which, as we will see below, only contributes at lower orders in $1/N$) is precisely the effective theory of a perturbation field around the melonic phase. 

We note that the case of matrices, $D=2$, is quite different. A careful inspection 
shows that for matrices, even after translating to $a=a_0$, the integral over the perturbation $M$ contributes at leading order in $1/N$. This is natural because for matrices the melonic 
graphs are only a subset of the planar graphs, and one cannot subtract the contribution of the planar sector by a single counterterm.
 
\subsection{The effective theory for the fluctuation field}

The space of Hermitian $N \times N$ matrices is a $N^2$ dimensional real vector space ${\cal H}$ with 
inner product $\Braket{A|B} \equiv \Tr[ A B]$. The quadratic part of the action is a quadratic form 
on the direct sum of $D$ such Hilbert spaces $\bigoplus_{c=1}^D \cH^c$. 
Denoting a generic element in this direct sum space by:
\[
 \hat M = \begin{pmatrix}
            M^1 \\ \vdots \\ M^D 
           \end{pmatrix} \; ,
\]
the inner product in $\bigoplus_{c=1}^D \cH^c $ is:
\[
 \Braket{ \hat P| \hat  M } = \sum_{c=1}^D \Tr_c[  P^c M^c] \; .
\]

The effective theory for the fluctuation field $\hat M$ is somewhat involved due to the fact that, while the quadratic part of the action is a quadratic form on the direct sum space $\bigoplus_{c=1}^D \cH^c $,
the interaction $Q(M)$ is a trace in the \emph{tensor product} space $\bigotimes_{c=1}^D \cH^c $. This is the root of all the subtleties we will need to deal with below.
 
We now work out the effective covariance for the fluctuation field. The  quadratic part of the action of the fluctuation field in Eq.~\eqref{eq:Ztranslatat} is:
 \begin{align*}
  & N^{D-1} \sum_{c=1}^D \Tr_c[M^c M^c] - \Tr_{\cD}[  ( b M)^2] = \crcr
  & \qquad  = N^{D-1} (1-b^2)  \sum_{c=1}^D \Tr_c[ (M^c)^2] + N^{D-2} b^2 \sum_{c=1}^D \Big( \Tr_c[M^c] \Big)^2 - N^{D-2} b^2 \left( \sum_{c=1}^D  \Tr_c [M^c] \right)^2 \;.
 \end{align*}

We denote by $ \hat I_c$ the element of $\bigoplus_{c=1}^D \cH^c$ consisting in the identity matrix in ${\cal H}^c$ and zero on the other components
and by $ \hat I$ the element consisting in the identity matrix in all the ${\cal H}^c$s:
\[
 \hat  I_c = \begin{pmatrix}
             0  \\ \vdots \\{\bf 1} \\ \vdots \\ 0 
           \end{pmatrix}  \; ,
 \quad  \hat I  = \begin{pmatrix}
             {\bf 1}  \\ \vdots \\{\bf 1}  
           \end{pmatrix}  = \sum_{c=1}^D \hat I_c \; .        
\]
The orthogonal projectors on the one dimensional subspaces generated by $ \hat {I}_c$ and $ \hat {I}$ are:
\begin{align*}
 {\bf P}_{ c} \;  {\hat M} & = \frac{\Braket{{\hat I}_c| {\hat M} } }{\Braket{{\hat I}_c| {\hat I}_c }} {\hat I}_c = \frac{\Tr_c[M^c]}{N} {\hat I}_c  \Rightarrow 
 \Braket{{\hat M} | {\bf P}_{c} \;  {\hat M} } =  \frac{\Tr_c[M^c]}{N} \Tr_c[M^c]
 \crcr
 {\bf P}  {\hat M} & = \frac{\Braket{{\hat I}| {\hat M} } }{\Braket{{\hat I}| {\hat I}}} {\hat I} 
 = \frac{ \sum_{c=1}^D\Tr_c[M^c]}{DN} {\hat I} \Rightarrow    \Braket{{\hat M} | {\bf P}  {\hat M} } = \frac{ \sum_{c=1}^D\Tr_c[M^c]}{DN}
   \left( \sum_{c'=1}^D\Tr_{c'}[M^{c'}] \right) \; .
\end{align*}
Let us denote ${\bf I}$ the identity operator on $ \bigoplus_{c=1}^D \cH^c$. The quadratic part writes then as 
$ \Braket{ \hat M | {\bf O} \; \hat M} $ where the operator ${\bf O}$ is:
\[
 {\bf O} = N^{D-1}(1-b^2) \; {\bf I} + N^{D-1}b^2 \sum_{c=1}^D  {\bf P}_c - N^{D-1} Db^2 \;  {\bf P} \;. 
\]
 
The covariance of the effective theory is the operator ${\bf O}^{-1}$. In order to compute it, it is most convenient to diagonalize ${\bf O}$.
 
\begin{lemma}
The operator ${\bf O}$   is diagonalized as:
\[
 {\bf O} = N^{D-1}(1-b^2) \left[ {\bf I} -  \sum_{c=1}^D  {\bf P}_c  \right] +
 N^{D-1} \left[ \sum_{c=1}^D  {\bf P}_c - {\bf P}  \right] +  N^{D-1} (1- Db^2 )  {\bf P} \; ,
\]
where  $   {\bf I} -  \sum_{c=1}^D  {\bf P}_c   $, $ \sum_{c=1}^D  {\bf P}_c - {\bf P} $ and ${\bf P} $ 
are the projectors on the eigenspaces of ${\bf O}$. The eigenvalues of ${\bf O}$ and their degeneracies are:
\begin{align*}
\Lambda_1 =N^{D-1} (1-b^2) \;&, \quad  {\rm dim} \left( {\rm Im}\left[ {\bf I} -  \sum_{c=1}^D  {\bf P}_c  \right] \right)  = DN^2-D \; ,\crcr 
\Lambda_2 =N^{D-1} \;&,  \quad {\rm dim} \left( {\rm Im }\left[ \sum_{c=1}^D  {\bf P}_c - {\bf P}  \right] \right)  = D-1 \; ,\crcr
\Lambda_3 =N^{D-1}(1-Db^2) \;&, \quad  {\rm dim} \left( {\rm Im} \left[ {\bf P } \right]  \right)  = 1 \; .
 \end{align*}

\end{lemma}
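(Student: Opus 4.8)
The plan is to exhibit a complete family of mutually orthogonal projectors in terms of which ${\bf O}$ is manifestly diagonal, and then to read off the eigenvalues and degeneracies. First I would record the geometric facts about the distinguished vectors. The $\hat I_c$ are mutually orthogonal, since each lives in a single summand $\cH^c$, so the ${\bf P}_c$ are mutually orthogonal rank-one projectors, ${\bf P}_c {\bf P}_{c'} = \delta_{cc'} {\bf P}_c$, and their sum ${\bf S} \equiv \sum_{c=1}^D {\bf P}_c$ is the orthogonal projector onto the $D$-dimensional subspace $\mathrm{span}\{\hat I_1,\dots,\hat I_D\}$. Because $\hat I = \sum_{c=1}^D \hat I_c$ lies in this subspace, the rank-one projector ${\bf P}$ onto $\mathrm{span}\{\hat I\}$ satisfies the absorption relations ${\bf S}{\bf P} = {\bf P}{\bf S} = {\bf P}$. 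These relations are the only substantive input; everything else is projector algebra.

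Next I would set $A = {\bf I} - {\bf S}$, $B = {\bf S} - {\bf P}$, $C = {\bf P}$ and check that $\{A,B,C\}$ is a complete orthogonal system of projectors. Completeness is immediate: $A+B+C = {\bf I}$. Using ${\bf S}^2 = {\bf S}$, ${\bf P}^2 = {\bf P}$ and the absorption relations one verifies $A^2 = A$, $B^2 = B$, $C^2 = C$ together with $AB = AC = BC = 0$; for instance $B^2 = ({\bf S}-{\bf P})^2 = {\bf S} - {\bf P} - {\bf P} + {\bf P} = B$ and $AC = ({\bf I}-{\bf S}){\bf P} = {\bf P} - {\bf P} = 0$. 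The images of $A$, $B$, $C$ are therefore mutually orthogonal subspaces spanning $\bigoplus_{c=1}^D \cH^c$.

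With this in hand the diagonalization is a one-line substitution. Writing ${\bf I} = A+B+C$, ${\bf S} = B+C$ and ${\bf P} = C$ in ${\bf O} = N^{D-1}\big[(1-b^2){\bf I} + b^2 {\bf S} - Db^2 {\bf P}\big]$ and collecting coefficients yields $(1-b^2)$ on $A$, $(1-b^2)+b^2 = 1$ on $B$, and $(1-b^2)+b^2-Db^2 = 1-Db^2$ on $C$, which is precisely the claimed expression. Since $A,B,C$ project onto mutually orthogonal subspaces, these coefficients are exactly the eigenvalues $\Lambda_1,\Lambda_2,\Lambda_3$.

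Finally I would compute the degeneracies as ranks. One has $\mathrm{rank}({\bf P}) = 1$; since the $\hat I_c$ are $D$ independent vectors, $\mathrm{rank}({\bf S}) = D$, whence $\mathrm{rank}(B) = \mathrm{rank}({\bf S}) - \mathrm{rank}({\bf P}) = D-1$; and since $\dim \bigoplus_{c=1}^D \cH^c = DN^2$, $\mathrm{rank}(A) = DN^2 - \mathrm{rank}({\bf S}) = DN^2 - D$. These are the stated dimensions, and they sum to $DN^2$ as a consistency check. The only point requiring genuine care is the absorption relation ${\bf S}{\bf P} = {\bf P}$, equivalently $\mathrm{Im}({\bf P}) \subseteq \mathrm{Im}({\bf S})$, which is what makes $B$ a true projector and is the one place where the specific geometry $\hat I = \sum_c \hat I_c$ enters; once it is in place, the rest is mechanical.
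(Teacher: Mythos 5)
Your proposal is correct and follows essentially the same route as the paper: both verify that ${\bf I}-\sum_c {\bf P}_c$, $\sum_c {\bf P}_c - {\bf P}$ and ${\bf P}$ form a complete system of mutually orthogonal projectors (the paper checks the absorption relations ${\bf P}_{c'}{\bf P}_c=\delta_{cc'}{\bf P}_c$ and $\bigl(\sum_c {\bf P}_c\bigr){\bf P}={\bf P}\bigl(\sum_c {\bf P}_c\bigr)={\bf P}$ by direct computation, while you derive them slightly more cleanly from $\hat I\in\mathrm{span}\{\hat I_1,\dots,\hat I_D\}$), then read off the eigenvalues by collecting coefficients and count degeneracies by the same rank bookkeeping. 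No gaps.
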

\begin{proof} It is enough to check that the operators $   {\bf I} -  \sum_{c=1}^D  {\bf P}_c   $, $ \sum_{c=1}^D  {\bf P}_c - {\bf P} $ and ${\bf P} $ 
are three (mutually orthogonal) orthogonal projections. The three operators are obviously Hermitian and:
\begin{align*}
& {\bf P}_{c'} {\bf P}_{c} \hat M = \frac{\Tr_c[M^c]}{N }  \frac{ \Braket{\hat I_{c'} | \hat I_{c}} }{N} \hat I_{c'} = \delta_{cc'} {\bf P}_c \hat M \; , \crcr
& \left( \sum_{c=1}^D {\bf P}_c \right) {\bf P}  \hat M 
= \frac{ \sum_{c'=1}^D\Tr_{c'}[M^{c'}] }{ DN }  \sum_{c=1}^D \frac{ \Braket{\hat I_c | I} }{N} \hat I_c = {\bf P}  \hat M \; ,\crcr
&  {\bf P} \left( \sum_{c=1}^D {\bf P}_c \right) \hat M  = \sum_{c=1}^D \frac{\Tr_c[M^c]}{N} 
\frac{\Braket{{\hat I}| {\hat I}_c } }{DN} {\hat I} ={\bf P}  \hat M  \; .
\end{align*}
The rest of the proof is a straightforward computation:
\begin{align*}
  \left[ {\bf I} -  \sum_{c=1}^D  {\bf P}_c  \right]  \left[ {\bf I} -  \sum_{c=1}^D  {\bf P}_c  \right] & =
   {\bf I} -  2  \sum_{c=1}^D  {\bf P}_c + \sum_{c,c'=1}^D {\bf P}_c {\bf P}_{c'}  =  {\bf I} -  \sum_{c=1}^D  {\bf P}_c  \;,  \crcr
 \left[ \sum_{c=1}^D  {\bf P}_c - {\bf P}  \right] \left[ \sum_{c=1}^D  {\bf P}_c - {\bf P}  \right] & =  
\sum_{c,c'=1}^D {\bf P}_c {\bf P}_{c'}  -  {\bf P} \left( \sum_{c=1}^D  {\bf P}_c \right)-  \left( \sum_{c=1}^D  {\bf P}_c \right) {\bf P}+ {\bf P} \crcr
& = \sum_{c=1}^D  {\bf P}_c - {\bf P} \;, \crcr
{\bf P}^2 & = {\bf P} \;, \crcr
 \left[ {\bf I} -  \sum_{c=1}^D  {\bf P}_c  \right] \left[ \sum_{c=1}^D  {\bf P}_c - {\bf P}  \right]  & =
 \sum_{c=1}^D  {\bf P}_c -  \sum_{c,c'=1}^D {\bf P}_c {\bf P}_{c'}  - {\bf P} + \left( \sum_{c=1}^D  {\bf P}_c \right) {\bf P} =0 \; ,\crcr
  \left[ {\bf I} -  \sum_{c=1}^D  {\bf P}_c  \right] {\bf P} &= {\bf P} -  \left( \sum_{c=1}^D  {\bf P}_c \right) {\bf P} =0  \; , \crcr
 \left[ \sum_{c=1}^D  {\bf P}_c - {\bf P}  \right] {\bf P} &= \left( \sum_{c=1}^D  {\bf P}_c \right) {\bf P}  - {\bf P}^2 =0 \; .
\end{align*}

Concerning the dimensions of the images of each of these operators, by definition ${\bf P}$ is one dimensional, 
${\bf P}_c$ and ${\bf P}_{c'}$ are orthogonal if $c\neq c'$ hence $ \left[ {\bf I} -  \sum_{c=1}^D  {\bf P}_c  \right] $
is $DN^2 -D$ dimensional and finally the sum of the three projectors is the identity.

\end{proof}
 
It is now easy to compute the covariance of the effective theory by inverting ${\bf O}$:
\begin{align*}
 {\bf O}^{-1} & = \frac{1}{N^{D-1}(1-b^2)} \left[ {\bf I} -  \sum_{c=1}^D  {\bf P}_c  \right] +
 \frac{1}{N^{D-1}} \left[ \sum_{c=1}^D  {\bf P}_c - {\bf P}  \right] +  \frac{1}{N^{D-1} (1- Db^2 )}  {\bf P} \crcr
 & = \frac{1}{N^{D-1}(1-b^2)}  {\bf I} 
  -\frac{b^2}{N^{D-1}(1-b^2)}  \left( \sum_{c=1}^D  {\bf P}_c \right)  +  \frac{Db^2}{N^{D-1}(1-Db^2)}  {\bf P} \;.
\end{align*}

Observing that the normalized Gaussian integral with covariance $C$ for any field $\phi$ can be written as a differential operator \cite{salmhofer1999renormalization}:
\[
 \int \frac{ [d\phi]}{ \sqrt{\det C}} \; e^{ - \frac{1}{2} \phi C^{-1} \phi +F(\phi)} \equiv \left[e^{\frac{1}{2} \frac{\partial}{\partial \phi} C  \frac{\partial}{\partial \phi}}  e^{F(\phi) }\right]_{\phi=0} \;, 
\]
and, taking into account the normalization of the Gaussian integral, 
the generating function in Eq.~\eqref{eq:Ztranslatat} becomes:
\begin{align}\label{eq:Ztranslatatdiff}
  Z^{(4)}(J,\bar J)   =  \frac{e^{ - N^D \left(  \frac{D a^2 }{2}    +    \ln (1 - g Da) \right) } }{\sqrt{  (1-b^2)^{DN^2-D}   (1-Db^2)   }  }
 \Big[ e^{\frac{1}{2} \Braket { \frac{\partial}{\partial \hat M} | {\bf O}^{-1} \frac{\partial}{\partial \hat M}  } } 
   \;  e^{    Q(M) + \frac{1}{N^{D-1} } 
     \frac{1}{(1 - g Da )} J \left(  \frac{1}{ \mathbf{1}^{\cD}  - b M   } \right)\bar J} \Big]_{M^c=0}   \; .
 \end{align}
Using the explicit form of the projectors ${\bf P}_c$ and ${\bf P}$, the differential operator in the above equation writes in terms of derivatives with respect to $M^c$ as:
\begin{align*}
   \Braket { \frac{\partial}{\partial \hat M} | {\bf O}^{-1} \frac{\partial}{\partial \hat M}  } & =  \frac{1}{N^{D-1}(1-b^2)}  \sum_{c=1}^D \Tr_c\left[ \frac{\partial}{\partial M^c} \frac{\partial}{\partial M^c} \right] 
  -\frac{b^2}{N^{D}(1-b^2)}  \sum_{c=1}^D \Tr_c\left[  \frac{\partial}{\partial M^c} \right]   \Tr_c\left[  \frac{\partial}{\partial M^c} \right] + \crcr
 & \qquad \qquad  +  \frac{b^2}{N^{D}(1-Db^2)}    \sum_{c,c'=1}^D \Tr_c\left[  \frac{\partial}{\partial M^c} \right]   \Tr_{c'}\left[  \frac{\partial}{\partial M^{c'}} \right]  \; .
\end{align*}

The effective theory can be significantly simplified. The crucial observation is that the differential operator $ \Braket { \frac{\partial}{\partial \hat M} | {\bf O}^{-1} \frac{\partial}{\partial \hat M}  } $
acts on functions which depend only on $M = \sum_{c=1}^D M^c \otimes {\bf 1}^{\cD \setminus \{c\} }$. We observe that the action of an operator:
 \begin{align*}
&  \Tr_c\left[ \frac{\partial}{\partial M^c} \right] \left( \sum_{c=1}^D M^c \otimes {\bf 1}^{\cD\setminus\{c\} } \right)^n 
=   \sum_{n_1,\dots n_D \ge 0}^{n_1+\dots n_D=n} \frac{n!}{n_1!\dots n_D!} n_c  \; (M^1)^{n_1} \otimes \dots (M^c)^{n_c-1} \dots \otimes (M^D)^{n_D} \crcr
& \qquad =   n \sum_{n_1,\dots n_D \ge 0}^{n_1+\dots n_D=n-1} \frac{(n-1)!}{n_1!\dots n_D!}   (M^1)^{n_1} \otimes \dots (M^c)^{n_c} \dots \otimes (M^D)^{n_D}  
= n \left( \sum_c M^c \otimes {\bf 1}^{\cD\setminus\{c\} } \right)^{n-1} \;,
\end{align*}
\emph{does not depend} on the color $c$. It follows that 
all the terms $ \Tr_c[ \frac{\partial}{\partial M^{c}}]$ can be replaced by the symmetrized operator $ \frac{1}{D} \sum_{c=1}^D \Tr_c[ \frac{\partial}{\partial M^{c}}] $
hence the operator $\Braket { \frac{\partial}{\partial \hat M} | {\bf O}^{-1} \frac{\partial}{\partial \hat M}  }$ can in fact be replaced with:
 \begin{align*}
 & \Braket { \frac{\partial}{\partial \hat M} | \tilde {\bf O}^{-1} \frac{\partial}{\partial \hat M}  } = \crcr
 & \quad = \frac{1}{N^{D-1}(1-b^2)}  \sum_{c=1}^D \Tr_c\left[ \frac{\partial}{\partial M^c} \frac{\partial}{\partial M^c} \right]   
 + \frac{b^2 D (D-1)} {N^{D}(1-b^2)(1-Db^2)}  \left( \frac{1}{D} \sum_{c=1}^D\Tr_c\left[  \frac{\partial}{\partial M^c} \right]   \right)
 \left( \frac{1}{D} \sum_{c=1}^D\Tr_c\left[  \frac{\partial}{\partial M^c} \right]   \right) \;.
\end{align*}

The new operator, $\tilde {\bf O}^{-1}$, which defines the covariance of the effective theory can be written in terms of ${\bf I}$ and ${\bf P}$ only: 
\[
\tilde {\bf O}^{-1} =   \frac{1}{N^{D-1}(1-b^2)} {\bf I} +   \frac{b^2 (D-1)} {N^{D-1}(1-b^2)(1-Db^2)} {\bf P}   =
  \frac{1}{N^{D-1}(1-b^2)} ( {\bf I} - {\bf P} )  +   \frac{1} {N^{D-1} (1-Db^2)} {\bf P}    \;. \]

Correspondingly, the quadratic part of the action becomes:
\begin{align*}
& \Braket{\hat M | \tilde {\bf O}  \hat M} = \Braket{ \hat M | \Big( N^{D-1}(1-b^2)  {\bf I} - N^{D-1} (D-1)b^2  {\bf P}  \Big)\hat M }  \crcr
& \qquad  = N^{D-1}(1-b^2) \sum_{c=1}^D \Tr_c \left[ M^cM^c \right] - N^{D-2} \frac{D-1}{D} b^2  \left( \sum_{c=1}^D \Tr_c[M^c] \right)^2 \;. 
\end{align*}

With this remark, the generating function can then be rewritten in the simpler form:
\begin{align}\label{eq:smecher}
   Z^{(4)}(J,\bar J) = & \frac{e^{ - N^D \left(  \frac{D a^2 }{2}    +    \ln (1 - g Da) \right) } }{\sqrt{  (1-b^2)^{DN^2-D}   (1-Db^2)  }  } \crcr
 & \quad 
 \Big[ e^{\frac{1}{2} \frac{1}{N^{D-1}(1-b^2)}  \sum_{c=1}^D \Tr_c\left[ \frac{\partial}{\partial M^c} \frac{\partial}{\partial M^c} \right]  
 +\frac{1}{2}\frac{b^2 D (D-1)} {N^{D}(1-b^2)(1-Db^2)}  \left( \frac{1}{D} \sum_{c=1}^D\Tr_c\left[  \frac{\partial}{\partial M^c} \right]   \right)
 \left( \frac{1}{D} \sum_{c=1}^D\Tr_c\left[  \frac{\partial}{\partial M^c} \right]   \right)
 } 
 \crcr
 & \qquad \qquad \times  e^{    Q(M) + \frac{1}{N^{D-1} } 
     \frac{1}{(1 - g Da )} J \left(  \frac{1}{ \mathbf{1}^{\cD}  - b M   } \right)\bar J} \Big]_{M^c=0}
     \; ,
 \end{align}
 or, in integral form:
 \begin{align}\label{eq:smecherie}
   Z^{(4)}(J,\bar J) = & e^{ - N^D \left(  \frac{D a^2 }{2}    +    \ln (1 - g Da) \right) }    (1-b^2)^{\frac{D-1}{2}}       \crcr
& \quad  \int [dM^c] e^{-\frac{1}{2}   N^{D-1}(1-b^2) \sum_{c=1}^D \Tr_c \left[ M^cM^c \right] + \frac{1}{2} N^{D-2} \frac{D-1}{D} b^2  \left( \sum_{c=1}^D \Tr_c[M^c] \right)^2   } \crcr
& \qquad \qquad  \times  e^{    Q(M) + \frac{1}{N^{D-1} } 
     \frac{1}{(1 - g Da )} J \left(  \frac{1}{ \mathbf{1}^{\cD}  - b M   } \right)\bar J}
     \; ,
 \end{align}
 where the integral over $M^c$ is normalized to $1$ for $g=0,b=0,J=\bar J=0$. 
 
 We conclude that, after a translation by $a$ of the intermediate field, the quartic tensor model becomes (up to an overall factor) 
 a field theory for the fluctuation matrix fields $M^c$ with effective action:
 \[
\boxed{   S ( M) =  \frac{1}{2}   N^{D-1}(1-b^2) \sum_{c=1}^D \Tr_c \left[ M^cM^c \right] -  \frac{1}{2} N^{D-2} \frac{D-1}{D} b^2  \left( \sum_{c=1}^D \Tr_c[M^c] \right)^2  - Q(M) \;. }
 \]

\subsubsection{Stability}

The stability of the effective theory is encoded in the effective mass matrix of the fluctuation field, which is:
\[
 \frac{\partial^2 S (  M)}{ \partial  M \partial   M} \Big{|}_{M=0} =  N^{D-1}(1-b^2) ( {\bf I} - {\bf P} )  +   N^{D-1} (1-Db^2)  {\bf P}   \;.
\]
The effective mass matrix has then two eigenvalues: $ N^{D-1}(1-b^2)$ with degeneracy $ND^2-1$ and $ N^{D-1} (1-Db^2)$ with degeneracy $1$.
Let us analyze this effective mass matrix for the melonic vacuum and the instanton.

\paragraph*{The melonic vacuum.} The melonic vacuum $a_0$ is a solution of the equation $b(a_0)=a_0$. Furthermore,
$a_0$ is bounded from above by $\frac{1}{\sqrt{D}}$ for $g\le \frac{1}{2\sqrt{D}}$
and attains this value exactly at $ g= \frac{1}{2\sqrt{D}} $. It follows that all the eigenvalues of the effective mass matrix
around the melonic vacuum are positive for $g<\frac{1}{2\sqrt{D}}$, and at precisely $g=\frac{1}{2\sqrt{D}}$ one of these eigenvalues 
becomes $0$. The melonic vacuum is therefore the stable vacuum of the theory for $g<\frac{1}{2\sqrt{D}}$ and at $g= \frac{1}{2\sqrt{D}}$ 
the effective mass matrix around this vacuum develops a zero eigenvalue, hence the effective theory for the fluctuation field undergoes a phase transition.
This phase transition corresponds to a breaking of the full symmetry of the  model (conjugation by $D$ unitary transformations and color permutation).

\paragraph{The instanton.} The instanton is also a solution of the equation $b(a_0)=a_0$, but this time $a_0\ge \frac{1}{\sqrt{D}}$.
The effective mass matrix around the instanton has at least a negative eigenvalue in the region $g<\frac{1}{2\sqrt{D}}$, hence in this region the instanton is 
unstable. 
 
\section{Edge multicolored maps}\label{sec:multimaps}

In this section we introduce the notion of edge multicolored maps. We will show in the next section that the Feynman expansion
for the fluctuation field $M$ is indexed by these objects.

\begin{definition}
A \emph{combinatorial map with external cilia} is:
\begin{itemize}
\item a finite set of half edges $\cS$ which is the disjoint union of the set of \emph{internal half edges} $\cS_{\rm  int}$  and the set of external \emph{cilia} $\cS_{\rm ext}$.
We denote the elements of $\cS$ by $h\in \cS$.
\item a permutation $\sigma$ on $\cS$.
\item an involution $\alpha$ on the set of internal half edges $\cS_{\rm int}$ \emph{having no fixed points}.
The involution $\alpha$ on $\cS_{\rm int}$ is extended to an involution on $\cS$ (which we denote also by $\alpha$) 
\emph{with fixed points} by imposing that 
$\alpha(h)=h,\; \forall h\in \cS_{\rm ext}$, that is by imposing that all the external cilia are fixed points of $\alpha$.
\end{itemize}
\end{definition}

The cycles of the permutation $\sigma$ are the vertices of the map: the half edge $\sigma(h)$ is the successor of the half edge $h$ when turning around 
a vertex of the map. These maps have a well defined notion of \emph{faces}.

\begin{definition}
  The cycles of the permutation $\sigma\alpha $ fall in two categories:
 \begin{itemize}
  \item the cycles of $ \sigma\alpha $ which \emph{do not contain any cilium}, that is such that: \[ \alpha (h) \notin \cS_{\rm ext} \; ,\] 
   for any $h$ in the cycle are called the \emph{internal faces} of the map.
  \item the cycles of $\sigma \alpha$ which contain cilia. They are further subdivided into sequences of half edges separating two consecutive cilia:
\[  h, \sigma \alpha (h) ,\dots , (\sigma \alpha)^r (h)  \quad  {\rm such \; that} \;
\begin{cases}
 \alpha(h)\in \cS_{\rm ext} \crcr
  \forall 0<s<r , \;\; \alpha (\sigma \alpha)^s (h)   \notin \cS_{\rm ext} \crcr 
  \alpha  (\sigma \alpha)^r (h)  \in \cS_{\rm ext}
\end{cases}
\;, \]
 called \emph{external strands}.
 \end{itemize}
\end{definition}

\begin{definition}
The \emph{corners} of a combinatorial map are the couples:
\[\big( h,\sigma(h) \big), \; \forall h\in \cS \; .\]
\end{definition}

The corners can be pictured as the pieces of vertices comprised between two consecutive half edges or between a half edge and a cilium.

An edge multicolored map with cilia is a map with cilia whose edges are furthermore colored by a subset $\cC$ of colors, $\cC \subset \{1,\dots D\}$ (in the next section we will 
encounter only maps such that $\cC$ consists in a unique color, but for now we keep the discussion general).

\begin{definition}
An \emph{edge multicolored combinatorial map with external cilia} $\cM$ is:
\begin{itemize}
\item a finite set $\cS$ which is the disjoint union of the sets $\cS^{(\cC)}_{\rm  int}$ of internal half edges of the colors $\cC$  and the set $\cS_{\rm ext}$ of external \emph{cilia}:
\[
 \cS = \left(\bigsqcup_{ \cC \subset \cD } \cS^{(\cC)}_{\rm  int} \right) \bigsqcup \cS_{\rm ext}
\]
\item a permutation $\sigma$ on $\cS$.
\item for every $\cC\subset \cD$, an involution $\alpha^{(\cC)}$ on the set of internal half edges of colors $\cC$, $\cS^{(\cC)}_{\rm int}$, having no fixed points.
We extend the involutions $\alpha^{(\cC)}$ to the whole of $\cS$ by setting 
$\alpha^{(\cC)}(h)=h,\; \forall h\in \cS\setminus \cS^{(\cC)}_{\rm int}$.
\end{itemize}
\end{definition}

An example of an edge multicolored map with external cilia is presented in Figure~\ref{fig:multicoloredmap}.
\begin{figure}[ht]
\begin{center}
\psfrag{c}{$c$}
\psfrag{c1}{$c_1$}
\psfrag{cc1}{$cc_1$}
 \includegraphics[width=5cm]{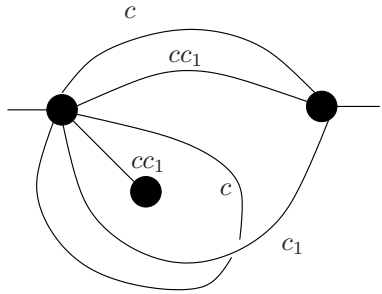}  
\caption{An edge multicolored map with external cilia (the cilia are the half edges in the figure).}
\label{fig:multicoloredmap}
\end{center}
\end{figure}

Let us consider an edge multicolored map $\cM$. For any color $c\in \{1,\dots D\}$, by erasing the edges of colors $\cC$ with $c\notin \cC$, one obtains the \emph{submap} 
$\cM^{c}$ with \emph{external cilia and scars} of $\cM$. 
\begin{definition}
The submap $\cM^c$ \emph{with external cilia and scars} of the edge multicolored map with cilia:
  \[ \cM= \left( \left(\bigsqcup_{ \cC \subset \cD  } \cS^{(\cC)}_{\rm  int} \right) \bigsqcup \cS_{\rm ext} ,\sigma,  \left\{ \alpha^{(\cC)} \Big{|} \cC\subset \cD  \right\} \right) \; ,\] 
  is (see Figure~\ref{fig:facesmulticoloredmap} for an example):
\begin{itemize}
 \item the finite set $\cS^c$ of half edges of color $c$ which is the disjoint union of:
    \begin{itemize}
     \item the \emph{internal half edges} of color $c$: 
        \[ \cS_{\rm  int}^c  = \bigsqcup_{ \cC, \; c\in \cC  } \cS_{\rm  int}^{ ( \cC) } \]
     \item the \emph{external cilia}:
        \[  \cS^c_{\rm ext}   = \cS_{\rm ext}\] 
     \item the \emph{scars}:
        \[ \cS^c_{\rm scars} =  \bigsqcup_{   \cC, \;  c\notin \cC  } \cS_{\rm  int}^{ ( \cC) } \] 
    \end{itemize}
    that is:
   \[ \cS^c =\cS_{\rm  int}^c \bigsqcup \cS^c_{\rm scars}  \bigsqcup \cS^c_{\rm ext} \;. \]
\item the permutation $\sigma$ on $\cS$.
\item the involution $\alpha^c$ on $\cS^c_{\rm int}$ having no fixed points defined by:
   \[
    \alpha^c(h)   = \alpha^{(\cC)}(h) \text{ if } h\in  \cS_{\rm  int}^{(\cC)} \text{ and } c\in \cC\;.
   \]
    The involution $\alpha^c$ is extended to an involution on $\cS$ with fixed points
    by imposing that each scar and each cilium is a fixed point of $\alpha^c$, i.e. $\alpha^c(h) = h$ for $h\in \cS^c \setminus \cS^c_{\rm int}$.
\end{itemize} 
\end{definition}
\begin{figure}[ht]
\begin{center}
\psfrag{c}{$c$}
\psfrag{c1}{$c_1$}
\psfrag{cc1}{$cc_1$}
\psfrag{c2n}{$c_2\neq c,c_1$}
 \includegraphics[width=8cm]{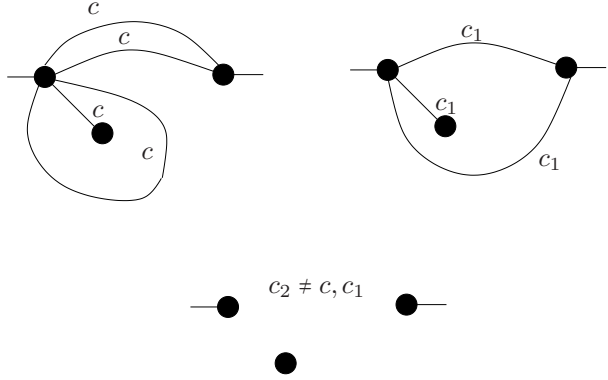}  
\caption{Submaps of color $c$, $c_1$ and respectively $c_2\neq c, c_1$ of the map in Figure~\ref{fig:multicoloredmap} where we did not represent the scars 
in the various submaps.}
\label{fig:facesmulticoloredmap}
\end{center}
\end{figure}

The map $\cM^{c}$ can be disconnected, and some of its connected components can consist in an isolated vertex. 
The corners of an edge multicolored map $\cM$ and its submaps $\cM^c$ are still defined as the couples $\big(h,\sigma(h)\big)$, for all the 
half edges $h$.

\begin{definition}
The \emph{faces} of an edge multicolored map with cilia and scars $\cM^c$ are the cycles of the permutation $\sigma\alpha^c$.
They are divided into
\begin{itemize}
 \item \emph{internal faces.} They are the cycles of $\sigma\alpha^c $ which do not contain any external cilium,
     that is, for every $h$ in the cycle:
        \[
         \alpha^c(h) \notin \cS^c_{\rm ext} \; .
        \]
     The internal faces can however contain any number of scars,
        \[
         \alpha^c(h) =h \in  \cS^c_{\rm scars} \; .
        \]
  \item \emph{external faces.} They are the cycles of $\sigma\alpha^c$ containing cilia. As before, 
  they are further subdivided in sequences of half edges separating two consecutive cilia:
\[  h, \sigma \alpha^c (h) ,\dots , (\sigma \alpha^c)^r (h)  \;  {\rm such \; that} \;
\begin{cases}
 \alpha^c(h) \in \cS^c_{\rm ext} \crcr
  \forall s, \; 0<s<r , \;\; \alpha^c (\sigma \alpha^c)^s (h)  \notin \cS^c_{\rm ext} \crcr 
  \alpha^c  (\sigma \alpha^c)^r (h)  \in \cS^c_{\rm ext}
\end{cases}, \]
 called \emph{external strands}.
\end{itemize}
If the map $\cM^{c}$ has a connected
component consisting in an isolated vertex, it counts as a face. 

The \emph{internal faces and the external strands of color $c$} of an edge multicolored map $\cM$
are the internal faces and the external strands of the submap $\cM^{c}$. We denote by $F^c_{\rm int}(\cM)$
the number of internal faces of color $c$ of $\cM$ and by  $F_{\rm int}(\cM)$ the total number of internal 
faces of $\cM$.
\end{definition}

Remark that the faces of an edge multicolored map are colored by a \emph{unique} color. For example, the faces of colors 
$c$, $c_1$ and $c_2\neq c, c_1$ of the map in Figure~\ref{fig:multicoloredmap} are the faces of its submaps represented
in Figure~\ref{fig:facesmulticoloredmap}. Also, observe that the scars play no role in the definition of the faces. Their only relevance
is to keep track of the corners of the map, and in particular to ensure that an edge multicolored map $\cM$ and its submaps $\cM^c$ have the same 
corners.

\begin{definition}\label{def:boundary}
 The \emph{boundary graph} $\partial \cM$ of an edge multicolored map with cilia $\cM$ is the edge $D$-colored graph 
 obtained by associating a black and a white external vertex to each of its cilia,
 and connecting the white vertex associated to the cilium $h$ with the black vertex associated to the cilium $h'$ by and edge of color $c$
 if there exists an external strand of color $c$ of the map $\cM$ (i.e. of its submap $\cM^c$) going from $h$ to $h'$.
\end{definition}
 
An edge multi colored map can be represented either, as we have done so far, as a map with colored edges or as a \emph{multi ribbon graph}.  
\begin{figure}[ht]
\begin{center}
\psfrag{c}{$c$}
\psfrag{c1}{$c_1$}
\psfrag{cc1}{$cc_1$}
 \includegraphics[width=5cm]{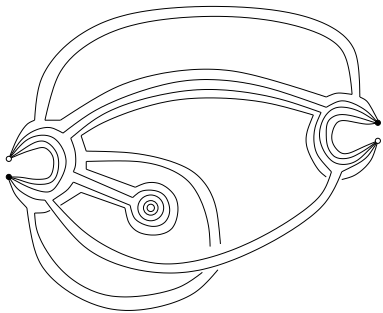}  
\caption{Multi ribbon graph representation of an edge multicolored map with external cilia.}
\label{fig:strandmulticoloredmap}
\end{center}
\end{figure}
This representation is obtained as follows (see Figure~\ref{fig:strandmulticoloredmap} for the representation of the map in
Figure~\ref{fig:multicoloredmap}):
\begin{itemize}
 \item we represent every internal vertex by $D$ concentric strands colored $1$ to $D$. By convention we orient the strands clockwise.
 \item we represent every ciliated vertex as $D$ strands connecting a white and a black external vertex. By convention we orient the strands
 from the white to the black vertex (one can further collapse the white and the black vertex into a cilium).
 \item for any edge of colors $\cC$ we connect the strands with colors in the set $\cC$ on its two end vertices by a ribbon. The edges of the 
 graph are thus made of $|\cC|$ parallel ribbons.
\end{itemize}

The advantage of this second representation is that the faces and the external strands are easily read off: the internal faces are the 
closed strands, while the external strands are the strands connecting a white and a black external vertex.
The boundary graph of $\cM$ is the graph obtained by erasing the internal faces of $\cM$ in this representation.
 
It is important to note that the maps we obtain in the next section \emph{never} have more than one cilium per ciliated vertex.

\section{Feynman rules}\label{sec:Feynrules}
 
 We will now discuss the Feynman rules for the effective theory defined by the Eq. \eqref{eq:smecher} and \eqref{eq:smecherie}.

\subsection{The effective vertices}

The generating function $  Z^{(4)}(J,\bar J)$ can be evaluated perturbatively.
Using for instance the differential operator form for the Gaussian integral, Eq.~\eqref{eq:smecher}, and expanding the exponential we obtain:
\begin{align*}
   Z^{(4)}(J,\bar J) = & 
   \frac{e^{ - N^D \left(  \frac{D a^2 }{2}    +    \ln (1 - g Da) \right) } }{\sqrt{  (1-b^2)^{DN^2-D}   (1-Db^2)  }  } \crcr
   & \times \sum_{n_{\rm  int} , n_{\rm ext} \ge 0}  \frac{1}{n_{\rm int}! n_{\rm ext}!} 
  \Bigg\{ e^{ \frac{1}{2} \Braket { \frac{\partial}{\partial \hat M} | \tilde {\bf O}^{-1} \frac{\partial}{\partial \hat M}  } }
 \;  \Big[  Q(M) \Big]^{n_{\rm int}}  \left(
      \frac{1}{N^{D-1} } 
     \frac{1}{ (1 - g Da )}  J \left(  \frac{1}{ \mathbf{1}^{\cD}  - b M   } \right)\bar J 
   \right)^{n_{\rm ext} }
  \Bigg\}_{M^{c}=0}   \; .
\end{align*}

As usual such an expression is evaluated in terms of Feynman graphs. The graphs are obtained as follows.
We represent the terms $  Q(M)  $ as internal vertices and the $J \bar J$ terms as external vertices. On the external vertices we add a cilium signaling the presence of the external sources $J \bar J$.
The derivatives in the operator $\Braket { \frac{\partial}{\partial \hat M} | \tilde {\bf O}^{-1} \frac{\partial}{\partial \hat M}  } $
act on the vertices, and each derivative creates a half edge. The operator $  \tilde {\bf O}^{-1}$ connects pairs of half edges 
into edges. 

We now evaluate the action of the derivative operators on the various vertices. For the ciliated vertices we have:
\begin{align*}
 \frac{\partial}{\partial M^c_{a^cb^c}} \left[  \frac{1}{ \mathbf{1}^{\cD}  - b M   }  \right]_{q^{\cD}p^{\cD}}
& =  b \; \sum_{   n^{\cD } m^{\cD } } \left(
   \left[  \frac{1}{ \mathbf{1}^{\cD}  - b M   }  \right]_{q^{\cD}   n^{\cD}  } 
    \left[  \frac{1}{ \mathbf{1}^{\cD}  - b M   }  \right]_{  m^{\cD} p^{\cD} }
  \right) \;  \left( \delta_{  n^{\cD\setminus \{c\}} m^{\cD \setminus \{c\} } } \delta_{  n^{c}   a^{c} } 
    \delta_{m^{c} b^{c}} \right) \; .
\end{align*}

For the internal, non ciliated vertices, we need to estimate the derivatives of the subtracted vertex function:
\[
 Q(M) = \Tr_{\cD} \left[ \ln  \frac{1}{ \mathbf{1}^{\cD}  - b M   }  \right] - \Tr_{\cD} \left[ aM \right] - \frac{1}{2}\Tr_{\cD} \left[ (bM )^2\right] \;.
\]
Clearly, starting with three derivatives, the last two terms do not contribute. For the lower orders we use:
\begin{align*}
 \frac{\partial}{\partial M^{c}_{  a^{c}  b^{c} } } \Tr_{\cD} \left[ aM \right] & = 
  \frac{\partial}{\partial M^{c}_{  a^{c}  b^{c} } }  \left( a \sum_{c=1}^D \Tr_{\cD}[M^c \otimes {\bf 1}^{\cD \setminus \{c\}}] \right) = 
  a \sum_{   p^{\cD } q^{\cD } }
   \left[   \mathbf{1}^{\cD}   \right]_{ q^{\cD}    p^{\cD }   } 
   \left( \delta_{ p^{\cD\setminus \{c\} } q^{\cD \setminus \{c\} } } \delta_{  p^{c}   a^{c} } 
    \delta_{q^{c} b^{c}} \right) \; , \crcr
 \frac{\partial}{\partial M^{c}_{  a^{c}  b^{c} } } \frac{1}{2}\Tr_{\cD} \left[ (bM )^2\right] &=
 \frac{b^2}{2}  \frac{\partial}{\partial M^{c}_{  a^{c}  b^{c} } }   \Bigg(  \sum_{c=1}^D \Tr_{\cD}[ (M^c)^2 \otimes {\bf 1}^{\cD \setminus \{c\} } ]  + 
 \sum_{\genfrac{}{}{0pt}{}{c ,c'=1}{c\neq c'}}^D\Tr_{\cD}[ M^c \otimes M^{c'} \otimes {\bf 1}^{\cD \setminus \{c\} \setminus \{c'\} }   
 ]  \Bigg)  \crcr
 & =   b^2  \sum_{   p^{\cD } q^{\cD } }
   M_{ q^{\cD}    p^{\cD }   } 
   \left( \delta_{  p^{\cD\setminus \{c\} } q^{\cD \setminus \{c\} } } \delta_{  p^{c}   a^{c} }
    \delta_{q^{c} b^{c}} \right) \; , \crcr
  \frac{\partial}{\partial M^{c}_{  a^{c}  b^{c} } }  \Tr_{\cD} \left[ \ln  \frac{1}{ \mathbf{1}^{\cD}  - b M   }  \right]   
 & = b \; \sum_{  p^{\cD } q^{\cD } }
   \left[  \frac{1}{ \mathbf{1}^{\cD}  -b M   }  \right]_{ q^{\cD}    p^{\cD }   } 
   \left( \delta_{  p^{\cD\setminus \{c\} } q^{\cD \setminus \{c\} } } \delta_{  p^{c}   a^{c} } 
    \delta_{q^{c} b^{c}} \right) \; .
\end{align*}
It follows that the first derivative of the vertex function is:
\begin{align*}
&  \frac{\partial}{\partial M^{c}_{ a^{c}  b^{c} } } \Big[   Q(M) \Big] =    \sum_{   p^{\cD } q^{\cD } }
   \left[  \frac{b}{ \mathbf{1}^{\cD}  - b M   } -  a \mathbf{1}^{\cD}  - b^2 M\right]_{ q^{\cD}   p^{\cD }   } 
   \left( \delta_{  p^{\cD\setminus \{c\} } q^{\cD \setminus \{c\} } } \delta_{  p^{c}   a^{c} } 
    \delta_{q^{c} b^{c}} \right)   \; ,
\end{align*}
and the second derivative is computed using:
\begin{align*}
& \frac{\partial}{\partial M^{c}_{  a^{c}  b^{c} } }   
   \left[  \frac{b}{ \mathbf{1}^{\cD}  - b M   } -  a \mathbf{1}^{\cD}  - b^2  M\right]_{ q^{\cD}   p^{\cD }   } =
 \crcr
& \quad 
=  b^2 \; \sum_{   n^{\cD } m^{\cD } } \left(
   \left[  \frac{1}{ \mathbf{1}^{\cD}  - b M   }  \right]_{q^{\cD}   n^{\cD}  } 
    \left[  \frac{1}{ \mathbf{1}^{\cD}  - b M   }  \right]_{  m^{\cD} p^{\cD} }
    - [{\bf 1}^{\cD}]_{q^{\cD}   n^{\cD}  } [{\bf 1}^{\cD} ]_{ m^{\cD} p^{\cD}   } 
  \right) \;  \left( \delta_{  n^{\cD\setminus \{c\}} m^{\cD \setminus \{c\} } } \delta_{  n^{c}   a^{c} } 
    \delta_{m^{c} b^{c}} \right) \; .
\end{align*}

\subsection{Broken and unbroken edges}

When evaluating the Gaussian integral in Eq.~\eqref{eq:smecher}, the two terms in the differential  operator 
$ \Braket { \frac{\partial}{\partial \hat M} | \tilde {\bf O}^{-1} \frac{\partial}{\partial \hat M}  } $ correspond to two kinds of edges:
\begin{itemize}
 \item \emph{unbroken edges} colored by a color $c$ coming from the term:
 \[ \Tr_c\left[ \frac{\partial}{\partial M^c} \frac{\partial}{\partial M^c} \right] \; .\]
 \item \emph{broken edges} coming from the term:
  \[  \left( \frac{1}{D} \sum_{c=1}^D\Tr_c\left[  \frac{\partial}{\partial M^c} \right]   \right)
 \left( \frac{1}{D} \sum_{c=1}^D\Tr_c\left[  \frac{\partial}{\partial M^c} \right]   \right) \; .\]
 The broken edges have no color.
\end{itemize}

We draw the broken edges as dashed edges and the unbroken edges as solid edges having a color.
As usual, as the fields are matrices, the order of the half edges around a vertex is relevant and the Feynman graphs are in fact combinatorial maps
(a.k.a. ribbon graphs \cite{DiFrancesco:1993nw}). 
An example of a map with broken and unbroken edges is presented in Figure~\ref{fig:brokenunbroken}.

\begin{figure}[htb]
\begin{center}
\psfrag{c}{$c$}
\psfrag{c1}{$c_1$}
\includegraphics[height=4cm]{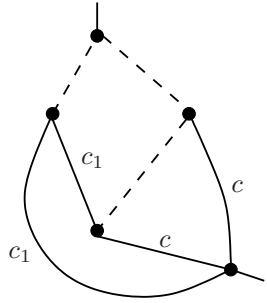}  
\caption{Map with cilia, broken and unbroken colored edges.}\label{fig:brokenunbroken}
\end{center}
\end{figure}

In the perturbative Feynman expansion, the generating function writes as a sum over maps $\cM$ with ciliated vertices whose 
edges fall in two categories: the unbroken edges $\cE_{u}(\cM)$, which have a color $c$, 
and the broken edges, $\cE_{b}(\cM)$. Let us denote $\sigma_{\cM}$ the successor permutation of the map $\cM$, and let us denote the corner
$\big( h,\sigma_{\cM}(h) \big)$ by $[h]$. Estimating the derivatives acting on the vertices, we obtain for every corner of the map an associated matrix element of an operator:
\begin{itemize}
 \item for any corner $[h]$ of a ciliated vertex or of a non ciliated vertex of coordination at least three we
 obtain:
 \[
  b \left[  \frac{1}{ \mathbf{1}^{\cD}  - b M   }  \right]_{ a^{\cD}_{[h]}    b^{\cD }_{[h]}   } \;,
 \]
 \item for a corner $[h]$ of a non ciliated vertex of coordination one we obtain:
 \[
  b  \left[  \frac{1}{ \mathbf{1}^{\cD}  -b M   } -  \frac{a}{b}\mathbf{1}^{\cD}  - b M\right]_{a^{\cD}_{[h]}  b^{\cD }_{[h]}   } \;,
 \]
 \item for the couple of corners $[h]$ and $[\sigma_{\cM}(h)]$ belonging to a non ciliated vertex of coordination two 
 we obtain:
 \begin{align*}
&  b^2 \Big(  \left[  \frac{1}{ \mathbf{1}^{\cD}  - b M   }  \right]_{a^{\cD}_{[h]}   b^{\cD}_{[h]}  } 
    \left[  \frac{1}{ \mathbf{1}^{\cD}  - b M   }  \right]_{  a_{[\sigma(h)]}^{\cD} b_{[\sigma(h)]}^{\cD} } - [{\bf 1}^{\cD}]_{a^{\cD}_{[h]}   b^{\cD}_{[h]}  }    [{\bf 1}^{\cD} ]_{  a_{[\sigma(h)]}^{\cD} b_{[\sigma(h)]}^{\cD} } \Big)  \; .
 \end{align*}
\end{itemize}

The edges contract the indices of the operators incident at their ends in a specific pattern, according to the nature of the edge (unbroken or 
broken). Extracting the logarithm we obtain a sum over connected maps.
We denote $  n_{\rm int}(\cM)$, $n_{\rm ext}(\cM) $, $E_u(\cM)$ and $E_{b}(\cM)$ the numbers of internal vertices, external vertices, 
unbroken edges and broken edges of $\cM$. Recall that $\cS(\cM)$ denotes the set of all the halfedges of $\cM$, and $\cS_{\rm ext}(\cM)$ the set of cilia of $\cM$.
With thees notations, the perturbative expansion of the generating function of the cumulants writes as:
\begin{align*}
    &   \ln Z^{(4)}(J,\bar J)  = - N^D \left(  \frac{D a^2 }{2}    +    \ln (1 - g Da) \right) - \frac{D(N^2-1)}{2} \ln(1-b^2) -\frac{1}{2} \ln(1-Db^2) + \crcr 
   & + 
    \sum_{\cM \; {\rm connected} } \frac{ b^{2E_u(\cM) +2 E_{b}(\cM) } } { n_{\rm int}(\cM)! n_{\rm ext}(\cM)! }  \sum_{a  b} \Bigg{[}
     \left( \prod_{ h \in \cS(\cM) }   \mathfrak{R}(M)_{a^{\cD}_{[h]}  b^{\cD }_{[h]}   }\right)  \left(  \prod_{h\in \cS_{\rm ext}(\cM)} 
    \frac{  J_{ a^{\cD}_{ [h] } } \bar J_{  b^{\cD}_{ [ \sigma_{\cM}^{-1}(h) ] } } }{N^{D-1} (1- g Da ) } \right)
  \crcr
   &  \;\;  \times \Big[ \prod_{e^{c}= (h,h') \in \cE_u(\cM) } \frac{1}{N^{D-1}(1-b^2)}
    \left(  \delta_{     b^{c}_{ [  \sigma_{\cM}^{-1}(h) ] } a^{c}_{[h'] } } 
     \delta_{   b^{c}_{   [ \sigma_{\cM}^{-1}(h')] } a^{c}_{[h] } } \right)   \left( \delta_{ b^{\cD\setminus \{c\} }_{ [ \sigma_{\cM}^{-1}(h)] } a^{\cD\setminus \{c\} }_{ [h] }}
    \delta_{   b^{\cD\setminus \{c\} }_{[  \sigma_{\cM}^{-1}(h') ] }  a^{\cD\setminus \{c\} }_{ [h']  }  } \right) \Big] 
    \crcr
   &  \;\;  \times \Big[ \prod_{e= (h,h') \in \cE_{b }(\cM) } \frac{b^2 D (D-1)} {N^{D}(1-b^2)(1-Db^2)}    \left( \delta_{ b^{\cD }_{ [ \sigma_{\cM}^{-1}(h)] } a^{\cD   }_{ [h] }}
    \delta_{   b^{\cD }_{[  \sigma_{\cM}^{-1}(h') ] }  a^{\cD }_{ [h']  }  } \right) \Big] 
    \Bigg{]}_{M=0} \; ,
\end{align*}
where we have used the notation:
\begin{itemize}
 \item for any corner $[h]$ of a ciliated vertex or of a non ciliated vertex of coordination at least three: 
 \[
  \mathfrak{R}(M)_{a^{\cD}_{[h]}  b^{\cD }_{[h]}   }  \equiv \left[  \frac{1}{ \mathbf{1}^{\cD}  - b M   }  \right]_{ a^{\cD}_{[h]}    b^{\cD }_{[h]}   } \;,
 \]
 \item for a corner $[h]$ of a non ciliated vertex of coordination one:
 \[
 \mathfrak{R}(M)_{a^{\cD}_{[h]}  b^{\cD }_{[h]}   }  \equiv
 \left[  \frac{1}{ \mathbf{1}^{\cD}  - b M   } -  \frac{a}{b}\mathbf{1}^{\cD}  - b M\right]_{a^{\cD}_{[h]}  b^{\cD }_{[h]}   } \;,
 \]
 \item for the couple of corners $[h]$ and $[\sigma_{\cM}(h)]$ belonging to a non ciliated vertex of coordination two:
 \begin{align*}
 & \mathfrak{R}(M)_{a^{\cD}_{[h]}  b^{\cD }_{[h]}   }  \mathfrak{R}(M)_{a^{\cD}_{[\sigma(h)]}  b^{\cD }_{[\sigma(h)]}   } 
 \equiv \left[  \frac{1}{ \mathbf{1}^{\cD}  - b M   }  \right]_{a^{\cD}_{[h]}   b^{\cD}_{[h]}  } 
    \left[  \frac{1}{ \mathbf{1}^{\cD}  - b M   }  \right]_{  a_{[\sigma(h)]}^{\cD} b_{[\sigma(h)]}^{\cD} }
    - [{\bf 1}^{\cD}]_{a^{\cD}_{[h]}   b^{\cD}_{[h]}  }    [{\bf 1}^{\cD} ]_{  a_{[\sigma(h)]}^{\cD} b_{[\sigma(h)]}^{\cD} } \; .
 \end{align*}
\end{itemize}

\section{Translating to the melonic vacuum}\label{sec:translatevacuum}

If we translate to the melonic vacuum, several significant simplifications can be made. Using $b(a_0)=a_0$  and taking into account that $M$ is set to zero we obtain:
\begin{itemize}
 \item for any corner $[h]$ of a ciliated vertex or of a non ciliated vertex of coordination at least three: 
 \[
  \mathfrak{R}(0)_{a^{\cD}_{[h]}  b^{\cD }_{[h]}   } =  \mathbf{1}^{\cD}_{ a^{\cD}_{[h]}    b^{\cD }_{[h]}   } \;,
 \]
 \item for a corner $[h]$ of a non ciliated vertex of coordination one:
 \[
 \mathfrak{R}(0)_{a^{\cD}_{[h]}  b^{\cD }_{[h]}   } = \frac{b(a_0)-a_0}{b(a_0) } \mathbf{1}^{\cD}_{ a^{\cD}_{[h]}    b^{\cD }_{[h]}   }=0\;,
 \]
 \item for the couple of corners $[h]$ and $[\sigma_{\cM}(h)]$ belonging to a non ciliated vertex of coordination two:
 \begin{align*}
 & \mathfrak{R}(0)_{a^{\cD}_{[h]}  b^{\cD }_{[h]}   }  \mathfrak{R}(0)_{a^{\cD}_{[\sigma(h)]}  b^{\cD }_{[\sigma(h)]}   } \crcr
 & \quad =
 [{\bf 1}^{\cD}]_{a^{\cD}_{[h]}   b^{\cD}_{[h]}  }    [{\bf 1}^{\cD} ]_{  a_{[\sigma(h)]}^{\cD} b_{[\sigma(h)]}^{\cD} } 
 - [{\bf 1}^{\cD}]_{a^{\cD}_{[h]}   b^{\cD}_{[h]}  }    [{\bf 1}^{\cD} ]_{  a_{[\sigma(h)]}^{\cD} b_{[\sigma(h)]}^{\cD} }
 =0 \; .
\end{align*}
\end{itemize}

Thus only maps whose internal vertices have coordination at least three (which we denote $\cM^3$) survive. 
Moreover, a map $\cM^3$ has unbroken colored edges and broken edges. Erasing all the broken edges leads to a 
map whose edges are colored by a color $c=\{1,\dots D\}$. We denote this map by $\cM^3_u$ (as it is formed only by the unbroken edges). 
Remark that $\cM^3_u$ can be disconnected, even though $\cM^3$ is connected (see Figure~\ref{fig:m3u}). 
\begin{figure}[htb]
\begin{center}
\includegraphics[height=4cm]{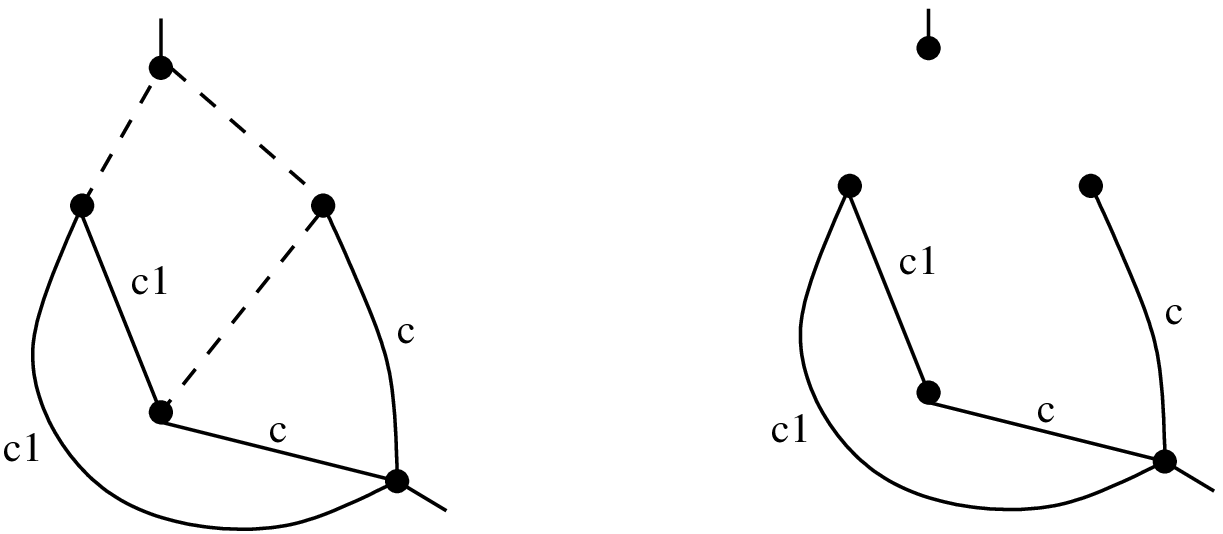}  
\caption{The maps $\cM^3$ and $\cM^3_u$.}\label{fig:m3u}
\end{center}
\end{figure}

As $\cM^{3}_u$ is a usual edge colored map it has a well defined notion of boundary graph $\partial \cM^3_u$ 
and (internal and external) faces. The indices are identified along the faces of $\cM^3_u$.
As the numbers of unbroken edges of $\cM^3$ and $\cM^3_u$ are equal, and the number of ciliated vertices of $\cM$
is the number of white vertices of $\partial \cM^3_u$ we obtain:
\begin{align*}
   \ln Z^{(4)}(J,\bar J)   =  &
 - N^D \left(  \frac{D a^2 }{2}    +    \ln (1 - g Da) \right) - \frac{D(N^2-1)}{2} \ln(1-b^2) -\frac{1}{2} \ln(1-Db^2) + \crcr 
& + \sum_{\cM^3 \; {\rm connected} } \frac{ \Tr_{\partial \cM^3_u} ( J,\bar J) }{  k(\partial \cM^3_u)!} 
\frac{1}{ n_{\rm int}(\cM^3)!}
\crcr
& \qquad \qquad \times \frac{ a_0^{2 E_u(\cM^3) +2 E_b(\cM^3)    } }{ (1- g Da_0 )^{k(\partial \cM^3_u )}  } 
\left[ \frac{1}{1-a_0^2}\right]^{E_u(\cM^3 ) }
\left[ \frac{a_0^2 D (D-1)} { (1-a_0^2)(1-Da_0^2)}  \right]^{E_{b }(\cM^3)} \crcr
&\qquad \qquad \times  N^{-D E_{b }(\cM^3) -(D-1)E_u(\cM^3 )    - (D-1) k(\partial\cM^3_u) + F_{\rm int}(\cM^3_u) } \; ,
\end{align*}
where $k(\partial \cM^3_u ) $ denotes the number of white vertices of the boundary graph $ \partial \cM^3_u$ (see definition \ref{def:boundary}) and 
\[
  \Tr_{\partial \cM^3_u} ( J,\bar J) = \prod_{i=1}^{k(\partial \cM^3_u  )} \bar J_{\bar a^{\cD}_i} J_{a^{\cD}_i}
  \prod_{e^c= (v_i,\bar v_j)\in \cE(\cM^3_u)} \delta_{a^c_i \bar a^c_j} \; .
\]
The rescaled cumulants are computed by taking derivatives and dividing by the appropriate power of $N$:
\begin{align}\label{eq:cumulshift1/N}
 K(\cB, \mu^{(4)}) & =  \sum_{ \genfrac{}{}{0pt}{}{\cM^3 \; {\rm connected}}{\partial \cM^3_u = \cB} } 
\frac{1}{ n_{\rm int}(\cM^3)!}    \frac{ a_0^{2E_u(\cM^3 ) + 4 E_{b}(\cM^3)  } }{ (1- g  Da_0 )^{k(\partial\cM^3_u)}  } 
 \frac{   [D(D-1)]^{  E_{b }(\cM^3)   }  }{ (1-a_0^2)^{  E_u(\cM^3 ) + E_{b }(\cM^3) } (1-Da_0^2)^{E_{b }(\cM^3)   }  }  \crcr
&\qquad \qquad  \times  N^{-D E_{b }(\cM^3) -(D-1)E_u(\cM^3) -D + (D-1) k(\partial\cM^3_u) +C(\partial \cM^3_u)   + F_{\rm int}(\cM^3_u) } \; . 
\end{align}

\subsection{The $1/N$ expansion for the fluctuation field}

In this section we show that the cumulants admit a $1/N$ expansion.
We label $\cM^3_{u;(\rho)}$ the connected components of $\cM^3_u$ and we denote $C(\cM^3_u)$ their number. 

\begin{definition}
For each connected component $\cM^3_{u;(\rho)}$ we define its \emph{deficit} $\eta (  \cM^3_{u;(\rho)} )   $ as:
\begin{align*}
&  \eta  (\cM^3_{u;(\rho)} )   = 2 +  E_u(\cM^3_{u;(\rho)} ) +(D-2)  V(\cM^3_{u;(\rho)}) -\crcr
& \qquad \qquad \qquad  - (D-1) k(\partial\cM^3_{u;(\rho)} ) - C(\partial \cM^3_{u;(\rho)} ) - F_{\rm int}(\cM^3_{u;(\rho)})  \;. 
\end{align*}
Furthermore, we define the \emph{excess of broken edges} of $\cM^3$, $L_b(\cM^3)$, and the \emph{excess of unbroken edges} of $\cM^3_{u;(\rho)} $,
$L_u( \cM^3_{u;(\rho)} )$, as:
\[ L_b(\cM^3) \equiv E_{b }(\cM^3)  - C(\cM_u^3)  +1 \; , \;\;  L_u( \cM^3_{u;(\rho)} ) \equiv E_u(\cM^3_{u;(\rho)} ) - V(\cM^3_{u;(\rho)}) + 1 \;. \]
\end{definition}

\begin{theorem}[$1/N$ expansion]\label{thm:1/Nintfield}
 The rescaled cumulants of $\mu^{(4)}$ can be expressed as the formal series:
\begin{align}\label{eq:cumulshift1/Nbun}
    K(\cB, \mu^{(4)}) = & \sum_{ \genfrac{}{}{0pt}{}{\cM^3 \; {\rm connected}}{\partial \cM^3_u = \cB} } 
\frac{1}{ n_{\rm int}(\cM^3)!}  \frac{ a_0^{2E_u(\cM^3 ) + 4 E_{b}(\cM^3)  } }{ (1- g Da_0 )^{k(\partial\cM^3_u)}  } 
 \frac{   [D(D-1)]^{  E_{b }(\cM^3)   }  }{ (1-a_0^2)^{  E_u(\cM^3 ) + E_{b }(\cM^3) }   (1-Da_0^2)^{E_{b }(\cM^3)   }  }
\crcr
&  \qquad \qquad   \times  
N^{
 -  D L_b(\cM^3)  - (D-2) \sum_{\rho=1}^{C(\cM^3_u)}   L_u( \cM^3_{u;(\rho)} ) - \sum_{\rho=1}^{C(\cM^3_u)} \eta (\cM^3_{u;(\rho)} )  
} \; . 
\end{align}
 
The right hand side of Eq.~\eqref{eq:cumulshift1/Nbun} is a series in $1/N$.
More importantly, there are only a finite number of maps $\cM^3$ contributing to any fixed order in $1/N$.
\end{theorem}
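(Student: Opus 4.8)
The plan is to establish the result in two stages: an exact rewriting of the power of $N$, followed by the two genuinely non-trivial statements, namely that the exponent is never positive and that finitely many maps contribute at each order.

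First I would verify that the exponent of $N$ in \eqref{eq:cumulshift1/Nbun} coincides with the one in \eqref{eq:cumulshift1/N}. This is pure bookkeeping: one substitutes the definitions of $L_b(\cM^3)$, $L_u(\cM^3_{u;(\rho)})$ and $\eta(\cM^3_{u;(\rho)})$ and sums over the connected components $\rho$ of $\cM^3_u$, using $\sum_\rho E_u(\cM^3_{u;(\rho)}) = E_u(\cM^3)$, $\sum_\rho V(\cM^3_{u;(\rho)}) = V$, $\sum_\rho k(\partial\cM^3_{u;(\rho)}) = k(\partial\cM^3_u)$ and the analogous identities for $C(\partial)$ and $F_{\rm int}$. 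One checks that the terms proportional to $V$ and to $C(\cM^3_u)$ cancel identically, leaving exactly $-DE_b - (D-1)E_u - D + (D-1)k(\partial\cM^3_u) + C(\partial\cM^3_u) + F_{\rm int}(\cM^3_u)$, so the two expressions for $K(\cB,\mu^{(4)})$ are literally equal.

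Next, that the exponent is always $\le 0$ reduces, since the coefficients $D$ and $D-2$ are non-negative, to showing $L_b \ge 0$, $L_u \ge 0$ and $\eta \ge 0$ separately. The first two are immediate: $L_u(\cM^3_{u;(\rho)}) = E_u - V + 1$ is the first Betti number of the connected graph $\cM^3_{u;(\rho)}$, and $L_b(\cM^3) = E_b - C(\cM^3_u) + 1$ is the first Betti number of the connected graph obtained by contracting each component of $\cM^3_u$ to a point and retaining the broken edges; both are non-negative. I would record here the identity $L_b(\cM^3) + \sum_\rho L_u(\cM^3_{u;(\rho)}) = E(\cM^3) - V + 1 = b_1(\cM^3)$, with $E(\cM^3) = E_u(\cM^3) + E_b(\cM^3)$ the total number of edges, which I will use for finiteness. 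The substantive point is $\eta \ge 0$: I would argue color by color. For each color $c$ the object carrying the color-$c$ strands of the vertices together with the color-$c$ unbroken edges is an ordinary ribbon graph, so Euler's relation bounds its number of internal faces $F^c_{\rm int}$ linearly in $V$, $E^c_u$ and its number of boundary components, the genus entering with a non-positive sign. Summing $F_{\rm int} = \sum_c F^c_{\rm int}$ over the $D$ colors and expressing the boundary contributions through the boundary graph $\partial\cM^3_{u;(\rho)}$, via $k(\partial)$ and $C(\partial)$, yields precisely $F_{\rm int} \le 2 + E_u + (D-2)V - (D-1)k(\partial) - C(\partial)$, i.e. $\eta \ge 0$, with equality on melonic configurations. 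This is the Gurau degree inequality in the present stranded setting; the main obstacle of the whole proof is doing the Euler bookkeeping carefully enough to produce the exact coefficients $(D-2)$ and $(D-1)$ and to exhibit the deviation as a sum of non-negative jacket genera. Alternatively one invokes the known non-negativity of the degree of colored graphs from \cite{critical,uncoloring}.

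Finally, finiteness at fixed order. Fix $\cB$, so that $k = k(\cB)$ and the number of ciliated vertices $n_{\rm ext} = k$ are fixed (each ciliated vertex carries a single cilium). At order $N^{-q}$ the exponent equals $-q$, that is $D L_b + (D-2)\sum_\rho L_u + \sum_\rho \eta = q$. Since every summand is non-negative and $D \ge 3$, this bounds $L_b \le q/D$ and $\sum_\rho L_u \le q/(D-2)$, hence it bounds $b_1(\cM^3) = L_b + \sum_\rho L_u$. Because the surviving term of $Q(M)$ starts at order $M^3$, all internal (non-ciliated) vertices have coordination at least three, so the handshake inequality gives $2E(\cM^3) \ge 3 V_{\rm int}$, where $V_{\rm int}$ is the number of non-ciliated vertices; combined with $E = V - 1 + b_1$ and $V = V_{\rm int} + k$ this forces $V_{\rm int} \le 2(k - 1 + b_1)$, a bound independent of the map. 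Thus $V_{\rm int}$, and then $E$, are bounded, leaving only finitely many $\cM^3$ at order $q$. I would emphasize that $D \ge 3$ is essential: for $D = 2$ the coefficient $D-2$ vanishes, $\sum_\rho L_u$ is then unconstrained at fixed order, and infinitely many planar maps survive, which is exactly the matrix-model obstruction noted earlier.
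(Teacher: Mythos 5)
Your stages 1, 2 and 4 reproduce the paper's proof: the same substitution of the definitions of $L_b$, $L_u(\cM^3_{u;(\rho)})$ and $\eta(\cM^3_{u;(\rho)})$ with additivity over the components of $\cM^3_u$; the same non-negativity of $L_b$ and $L_u$ from connectedness (your Betti-number phrasing is identical in content); and essentially the same finiteness argument --- the paper bounds the scaling by $-(D-2)\bigl[E_b(\cM^3)+E_u(\cM^3)-V(\cM^3)+1\bigr]$ and uses coordination at least three on non-ciliated vertices to get $V(\cM^3)\le \frac{2}{3}\bigl[E_b(\cM^3)+E_u(\cM^3)+k(\partial\cM^3_u)\bigr]$, which is your handshake bound $V_{\rm int}\le 2(k-1+b_1)$ in different packaging, with the same essential use of $D\ge 3$ that you correctly flag.

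The gap is in your stage 3, which is where the entire weight of the theorem sits. The inequality $\eta\ge 0$ is exactly Lemma \ref{lem:facesbound}, which the paper proves in the appendix by a dedicated argument: the auxiliary map $\cM^*$ whose dashed edges close the external strands, reduction to a spanning tree, and an iterative leaf deletion with a case analysis over looped and unlooped external strands, tracking the \emph{combined} quantity $F_{\rm int}+C(\partial)$. Your sketch asserts that summing per-color Euler relations and ``expressing the boundary contributions through $k(\partial)$ and $C(\partial)$'' yields precisely the bound, but this assertion is the hard step, not bookkeeping. Closing the color-$c$ external strands and applying Euler's relation componentwise gives $F^c_{\rm int}\le E^c_u - V + 2C^{c*}$, where $C^{c*}$ counts components of the closed-up color-$c$ submap; summing over colors leaves you needing $2\sum_c C^{c*}\le 2+2(D-1)V-(D-1)k(\partial\cM)-C(\partial\cM)$, and no single-color Euler relation controls this: the boundary graph is a genuinely multi-colored object whose connected components couple external strands of \emph{different} colors through the cilia. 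That coupling is precisely what the paper's looped/unlooped case analysis handles, and it is absent from your argument. Your fallback --- invoking the non-negativity of the degree from \cite{critical,uncoloring} --- covers only closed graphs, i.e.\ the case $k(\partial\cM)=0$; the refined bound with the terms $-(D-1)k(\partial\cM)-C(\partial\cM)$ for ciliated maps is exactly what is new in Lemma \ref{lem:facesbound} and cannot be imported from those references. So the skeleton of your proof is correct and matches the paper's, but the one step that carries Theorem \ref{thm:1/Nintfield} is missing.
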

\begin{proof}
A straightforward computation leads from Eq.~\eqref{eq:cumulshift1/N} to Eq.~\eqref{eq:cumulshift1/Nbun}
taking into account that the boundary graph and the internal faces of $\cM_u^3$ are distributed among the connected 
components $\cM^3_{u;(\rho)}$ and that the numbers of vertices, ciliated vertices and 
unbroken edges of $\cM^3$ and $\cM^3_u$ are equal.

As $\cM^3$ is connected, all the connected components of $\cM^3_u$ must be connected in between them by 
broken edges, hence $L_{b}(\cM^3)\ge 0$. 
Similarly, as $\cM^3_{u;(\rho)}$ is connected, $L_u( \cM^3_{u;(\rho)} )   \ge 0$. 

In appendix \ref{app:proof} we prove the following lemma:
\begin{lemma}\label{lem:facesbound}
 The number of internal faces of a connected edge colored map with $k(\partial\cM)$ cilia is bounded by:
\begin{align*}
     F_{\rm int} (\cM) & \le 1 - (D-1) k(\partial\cM) -C(\partial\cM) + (D-1)V(\cM)   + [E(\cM) - V(\cM)+1] \; .
 \end{align*}
 \end{lemma}
 
As a consequence of this lemma, for any connected component $\cM^3_{u;(\rho)}$, its deficit $\eta  (\cM^3_{u;(\rho)} )$ is non negative, 
$\eta  (\cM^3_{u;(\rho)} ) \ge 0 $. It follows that the right hand side of Eq.~\eqref{eq:cumulshift1/Nbun} is a series in $1/N$.
 
In order to prove the second part of the theorem, we note that, as $D > D-2$, the scaling in $N$ is bounded by:
\begin{align*}
 & -(D-2) \Big[ L_b(\cM^3)  +  \sum_{\rho=1}^{C(\cM^3_u)}  L_u( \cM^3_{u;(\rho)} )  \Big] = -(D-2) \big[  E_{b }(\cM^3) + E_u(\cM^3 ) - V(\cM^3) +1\big] \; .
\end{align*}

As the ciliated vertices in the map $\cM^3$ have coordination at least one, while the non ciliated vertices have coordination at least three,
we have:
 \begin{align*}
  2 \big[E_{b }(\cM^3)   + E_u(\cM^3) \big]  \ge 3 \left[ V(\cM^3) - k(\partial \cM^3_u) \right] + k(\partial\cM^3_u) \Rightarrow 
 V(\cM^3) 
  \le \frac{ 2 }{3} \big[ E_{b }(\cM^3)  + E_u(\cM^3)  +   k(\partial\cM^3_u) \big] \; ,
 \end{align*}
hence the scaling with $N$ of a term on the right hand side of Eq.~\eqref{eq:cumulshift1/N} is bounded by:
\[
    - (D-2) \left[\frac{  E_{b }(\cM^3)   + E_u(\cM^3)  }{3}- \frac{2 k(\partial\cM^3_u)}{3}+1 \right] \;.
\]

The number of external vertices of $\cM^3_u$ is fixed, as it equals the number of the white vertices of the invariant $\cB$ whose cumulant we
evaluate. It follows that at any order in $1/N$ only connected 
maps with at most a finite number of edges contribute, and the second part of the theorem follows.  

\end{proof}
 
 Setting the external sources to zero we obtain the free energy of the quartic melonic model:
 \begin{align*}
   \ln Z^{(4)} =  &
 - N^D \left(  \frac{D a^2 }{2}    +    \ln (1 - g Da) \right) - \frac{D(N^2-1)}{2} \ln(1-b^2) -\frac{1}{2} \ln(1-Db^2) + \crcr 
& + N^D \sum_{ \genfrac{}{}{0pt}{}{ \cM^3 \; {\rm connected} }{ \partial  \cM^3 = \emptyset } } 
\frac{1}{ n_{\rm int}(\cM^3)!}
\crcr
& \qquad \qquad \times  a_0^{2 E_u(\cM^3) +2 E_b(\cM^3)    }  
\left[ \frac{1}{1-a_0^2}\right]^{E_u(\cM^3 ) }
\left[ \frac{a_0^2 D (D-1)} { (1-a_0^2)(1-Da_0^2)}  \right]^{E_{b }(\cM^3)} \crcr
&\qquad \qquad \times  N^{
 -  D L_b(\cM^3)  - (D-2) \sum_{\rho=1}^{C(\cM^3_u)}   L_u( \cM^3_{u;(\rho)} ) - \sum_{\rho=1}^{C(\cM^3_u)} \eta (\cM^3_{u;(\rho)} )  
} \; .
\end{align*}
The sum over $\cM^3$ does not contribute at leading oder in $N$. Indeed, a term in the sum with  $  L_b(\cM^3) =  L_u( \cM^3_{u;(\rho)} ) = 0$ would be a tree, hence would have univalent vertices, 
which is impossible as all the vertices $\cM^3$ are at least three valent.

\appendix

\section{Proof of lemma \ref{lem:facesbound}}\label{app:proof}

In this appendix we show that the number of internal faces of a connected edge colored map $\cM$ cilia is bounded by:
\begin{align*}
     F_{\rm int} (\cM) & \le 1 - (D-1) k(\partial \cM) -C(\partial\cM) + (D-1)V(\cM)   + [E(\cM) - V(\cM)+1] \; .
 \end{align*}

 \begin{proof}
In order to establish this bound, we first introduce a new colored map $\cM^*$ associated to $\cM$ which will allow us to keep track of the boundary $\partial\cM$ while modifying $\cM$. 
Starting from $\cM$, for each cilium $h$, 
we introduce $2D$ half edges on the corner bearing the cilium $h$, one at its left and one at its right for each color $c\in \cD$ (that is, for each color $c$,
one half edge precedes $h$ and the other succeeds $h$ when turning around the vertex). We then connect these new half edges into dashed edges
following the edges of $\partial\cM$: if the external strand of color $c$ starting at the cilium $h$ ends at the cilium $h'$, we connect the half edge of color $c$ following $h$ with the half edge of color $c$
preceding $h'$. This construction is represented in figure \ref{fig:tauedges}.

 \begin{figure}[htb]
\begin{center}
\includegraphics[height=1.8cm]{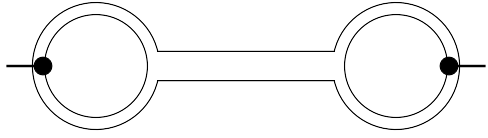}  
\caption{A map $\cM$ with two cilia. For simplicity, only the colors 1 and 2 have been represented.}\label{fig:simpleM}
\end{center}
\end{figure}

Each of the external strands of $\cM$ corresponds to an edge in the boundary graph $\partial \cM$, hence $\cM$ has exactly $Dk(\partial\cM)$ external strands. 
By construction $\cM^*$ has a new edge for each external strand of $\cM$ which closes the external strand of $\cM$ into an internal face of $\cM^*$.
It follows that:
\[
F_{\rm int} (\cM^*) = F_{\rm int} (\cM)+ Dk(\partial\cM)\ .
\]

On the other hand, as the new dashed edges exactly follow the external strands of $\cM$, it ensues that $\partial \cM^* = \partial \cM$.

\begin{figure}[htb]
\begin{center}
\includegraphics[height=3cm]{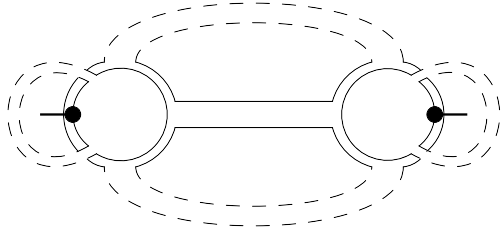}  
\caption{The map $\cM^*$ obtained from the map $\cM$ in Fig. \ref{fig:simpleM} by adding the dashed edges.}\label{fig:tauedges}
\end{center}
\end{figure}

The reason to introduce $\cM^*$ is the following. Below we will delete solid edges of $\cM^*$ (that is edges which belonged to $\cM$) 
and track the change in the number of internal faces of $\cM^*$ under these deletions. The crucial point is that these deletions 
modify the structure of the internal faces of $\cM^*$ but \emph{do not modify} its external strands, hence the boundary graph $\partial\cM^*$ will remain unchanged.
The introduction of $\cM^*$ allows us to modify a map while keeping its boundary graph unchanged.

Let $\cT$ be a tree in $\cM$ and let us denote $\cT^*$ the graph made of $\cT$ and all dashed edges of $\cM^*$. We have:
\[\partial\cT^*=\partial \cM^*=\partial\cM \;.\]

The map $\cT^*$ can be obtained from $\cM^*$ by deleting $E(\cM)-V(\cM)+1$ solid edges. As there is only one face going trough any 
of the solid edges, the deletion of such an edge in $\cM^*$ can either increase of decrease the number of faces by one, hence:
\begin{align*}
F_{\rm int} (\cM)+ Dk(\partial \cM)= F_{\rm int}(\cM^*)\ \leq\ F_{\rm int}(\cT^*) \,+\, E(\cM)-V(\cM)+1 \ .
\end{align*}

We call a \emph{leaf} of $\cT^*$ a vertex which becomes univalent when one erases all the dashed edges. A moments reflexion
reveals that the leafs of $\cT^*$ are the univalent vertices of $\cT$.
In order to complete the proof of lemma \ref{lem:facesbound}, it is now enough to show that for any $\cT^*$ we have: 
\begin{align*}
 F_{\rm int}(\cT^*) + C(\partial\cT^*) \leq 1 + k(\partial\cT^*) + (D-1)V(\cT^*)  \ .
\end{align*}

This bound is obtained by tracking the evolution of $F_{\rm int}(\cT^*) + C(\partial\cT^*)$ under the iterative deletion of the leaves. 
This deletion is somewhat involved, as it can change the boundary graph.

We say that an external strand of $\cT^*$ is \emph{looped at the cilium $h$} if it starts and ends at the cilium $h$. On the boundary graph $\partial \cT^*$, 
this corresponds to an edge connecting the black and white vertices associated to $h$. 
In figure \ref{fig:tauedges} we can identify two looped external strands.
Let us denote $\ell$ a leaf of $\cT^*$ and let us denote $\hat\cT^*$ the tree obtained from $\cT^*$ after deleting $\ell$ as follows.

\paragraph{Deleting a non ciliated leaf.} 
If $\ell$ has no cilium, deleting it simply means deleting the vertex $\ell$ and the edge connecting it to the rest of $\cT^*$. 
If the edge connecting $\ell$ to the rest of $\cT^*$ has color $c$, the deletion of $\ell$ erases all the faces of colors 
$\cD\setminus \{c\}$ running trough $\ell$. The boundary graph is unaltered by this deletion, therefore:
\begin{align*}
 F_{\rm int} (\cT^*)  = F_{\rm int} (\hat\cT^*) + (D-1) \; , & \qquad 
  C(\partial\cT^*) =C(\partial\hat\cT^*) \Rightarrow \crcr
F_{\rm int} (\cT^*)+C(\partial\cT^*) &= F_{\rm int} (\hat\cT^*) + C(\partial\hat\cT^*) + (D-1)\ .
 \end{align*}

\paragraph{Deleting a ciliated leaf.} 
If $\ell$ has a cilium $h$, deleting it consists in :
\begin{itemize}
 \item {\it Step 1.} For all the external strands starting or ending at $h$ that are not looped, we cut the corresponding dashed edges into half edges. 
 We then reconnect the dashed half edges into edges the other way around, respecting the colors. This is represented in figure 
 \ref{fig:deletion1} below. Observe that after performing this step, all the external strands going trough $\ell$ are looped.

 \begin{figure}[htb]
\begin{center}
\includegraphics[height=3.5cm]{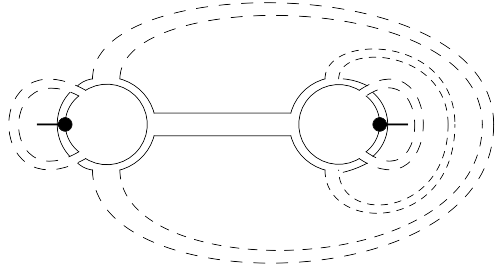}  
\caption{The first step of the deletion of a ciliated leaf : the dashed edges have been reconnected in such way that all the external strands are looped at the cilium $h$.}\label{fig:deletion1}
\end{center}
\end{figure}
 \item{\it Step 2.} The leaf now has only looped external strands and it is connected to the rest of the $\cT^*$ by a solid edge only.
 The cilium $h$ represents a connected component of the boundary graph consisting in a black and a white vertex connected by $D$ edges. 
 We erase the vertex $\ell$, its cilium, $h$, and the edge connecting $\ell$ to the rest of the graph, as in figure \ref{fig:deletion2}
\begin{figure}[htb]
\begin{center}
\includegraphics[height=3.2cm]{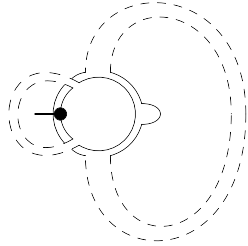}  
\caption{The second step of the deletion of a ciliated leaf.}\label{fig:deletion2}
\end{center}
\end{figure}
 \end{itemize}
 
{\it Boundary graph.} Upon deleting a ciliated leaf, the boundary graph changes: $\partial \hat \cT^*\neq \partial \cT^*$. We have two cases, each with two sub cases:
\begin{itemize}
\item None of the external strands of $\cT^*$ are looped at $h$. One needs to apply Step 1 for all the colors and then Step 2. 
There are two sub cases:
\begin{itemize}
 \item  the black and white vertices associated to $h$ in $\partial \cT^*$ belong to the same connected component of $\partial\cT^*$. Then, in the boundary graph,
 Step 1 creates at least a new connected component, and Step 2 deletes exactly one connected component. Thus:
   \[  C(\partial\cT^*) \leq  C(\partial\hat\cT^*) \; .\]
 \item the black and white vertices associated to $h$ in $\partial \cT^*$ belong to different connected components of $\partial\cT^*$. Then, in the boundary graph,
 Step 1 can not decrease the number of connected components, and Step 2 deletes exactly one connected component. Thus:
   \[  C(\partial\cT^*)  \leq C(\partial\hat\cT^*) +1 \; .\]
 \end{itemize}
\item At least one external strands of $\cT^*$ is looped at $h$. There are two sub cases:
\begin{itemize}
 \item not all the external strands are looped at $h$. The black and white vertices associated to $h$ in $\partial \cT^*$ belong to the same connected component of $\partial\cT^*$.
  One must apply Step 1 at least once, and then step 2. As before, Step 1 creates at least a new connected component, and Step 2 deletes exactly one connected component,
  hence:
  \[
   C(\partial\cT^*) \leq  C(\partial\hat\cT^*) \; .
  \]
\item all the external strands are looped at $h$. The black and white vertices associated to $h$ in $\partial \cT^*$ belong to two different connected components of $\partial\cT^*$,
   and one must apply Step 2 directly. This decrease the number of connect components of the boundary graph by $1$:
  \[
   C(\partial\cT^*)  =  C(\partial\hat\cT^*) +1 \; .
  \]
\end{itemize}
\end{itemize}
 
 {\it Internal faces.} Let us denote $e^c$ the solid edge of color $c$ connecting $\ell$ to the rest of $\cT^*$. We have several cases:
 \begin{itemize}
  \item the external strand of color $c$ is looped at $h$. Then there is only one internal face of color $c$ in $\cT^*$ running through $e^c$, which can not be erased by deleting $\ell$, hence:
 \[ F_{\rm int}^c(\cT^*)  =F_{\rm int}^c(\hat\cT^*) \; . \]
  \item the external strand of color $c$ is not looped at $h$. Then there are either one or two internal faces of color $c$ in $\cT^*$ running trough $e^c$, hence the number of internal 
  faces of color $c$ can not decrease by more than one:
      \[ F_{\rm int}^c(\cT^*)  \leq F_{\rm int}^c(\hat\cT^*) +1 \;. \]
    \item the external strand of color $c'\neq c$ is looped. Then there is only one internal face of color $c'$ through $\ell$, which is erased: 
       \[ F_{\rm int}^{c'}(\cT^*) = F_{\rm int}^{c'}(\hat\cT^*)+1 \; .\]
  \item the external strand of color $c'\neq c$ is not looped. Then there is just one internal face of color $c'$ through $\ell$, which is not erased: 
  \[F_{\rm int}^{c'}(\cT^*) = F_{\rm int}^{c'}(\hat\cT^*) \; .\] 
\end{itemize}

Combining the counting of the connected components of the boundary graphs with the counting of the internal faces we obtain four cases:
\begin{itemize}
 \item no external strand is looped at $h$. Then:
     \[ 
      F_{\rm int} (\cT^*)+C(\partial\cT^*) \le [F_{\rm int} (\hat \cT^*) + 1] + [ C(\partial \hat \cT^*) + 1] \le F_{\rm int} (\hat \cT^*)+C(\partial \hat \cT^*) + D \;. 
     \]
 \item all the external strands are looped at $h$. Then:
      \[
        F_{\rm int} (\cT^*)+C(\partial\cT^*) \le [F_{\rm int} (\hat \cT^*) + D-1] + [ C(\partial \hat \cT^*) + 1 ] \le F_{\rm int} (\hat \cT^*)+C(\partial \hat \cT^*) + D \;. 
      \]
 \item the external strand of color $c$ is looped at $h$, but at least one external strand of color $c'\neq c$ is not looped at $h$. Then:
      \[
        F_{\rm int} (\cT^*)+C(\partial\cT^*) \le [ F_{\rm int} (\hat \cT^*) + D-2 ]+ [ C(\partial \hat \cT^*) ] \le F_{\rm int} (\hat \cT^*)+C(\partial \hat \cT^*) + D \;.
      \]
\item the external strand of color $c$ is not looped at $h$, but some external strands of colors $c'\neq c$ are looped at $h$. Then
       \[
        F_{\rm int} (\cT^*)+C(\partial\cT^*) \le [ F_{\rm int} (\hat \cT^*)  + D ] + [C(\partial \hat \cT^*)] \le F_{\rm int} (\hat \cT^*)+C(\partial \hat \cT^*) + D \;.
      \]
\end{itemize}

In all cases, by deleting a ciliated leaf:
 \begin{equation*}
  F_{\rm int}^c(\cT^*) + C(\partial\cT^*) \leq F_{\rm int}^c(\hat\cT^*) + C(\partial\hat\cT^*) + D \; .
 \end{equation*}
Iterating up to the last vertex, we either end up with a vertex with no cilium or with a ciliated vertex with $D$ looped external strands.
Counting the number of internal faces and connected component of the two possible final graphs we conclude.
 
 \end{proof}
 
\newpage

\bibliography{Refs}{}

   \end{document}